\def\doi{8(3:24)2012}
\definecolor{darkred}{rgb}{0.65,0,0}
\definecolor{darkblue}{rgb}{0,0,0.4}
\definecolor{owngreen}{rgb}{0,0.6,0}
\definecolor{linkdarkgreen}{rgb}{0,0.45,0}
\newcommand{\dotcup}{\mathrel{\mathaccent\cdot\cup}}
\newcommand{\aut}[1]{\ensuremath{\mathcal{#1}}}
\newcommand{\G}{\ensuremath{\Gamma}\xspace}
\newcommand{\pI}{Player~I\xspace}
\newcommand{\pO}{Player~O\xspace}
\newcommand{\Leq}{\ensuremath{L_{\mathrm{eq}}}\xspace}
\newcommand{\GSG}{\ensuremath{\Gamma^{SG}}\xspace}
\newcommand{\Gd}[1]{\ensuremath{\Gamma_{#1}}\xspace}
\newcommand{\GdL}[1]{\ensuremath{\Gamma_{#1}(L)}\xspace}
\newcommand{\GdLA}[1]{\ensuremath{\Gamma_{#1}(\LA)}\xspace}
\newcommand{\Gp}[1]{\ensuremath{\Gamma'_{#1}}\xspace}
\newcommand{\GpLA}[1]{\ensuremath{\Gamma'_{#1}(\LA)}\xspace}
\newcommand{\Gpempty}{\ensuremath{\Gamma'}\xspace}
\newcommand{\LA}{\ensuremath{L(\aut{A})}\xspace}
\newcommand{\LsA}{\ensuremath{L_*(\aut{A})}\xspace}
\newcommand{\LoA}{\ensuremath{L_{\omega}(\aut{A})}\xspace}
\newcommand{\const}[1]{\ensuremath{\langle#1\rangle}\xspace}
\newcommand{\Nat}{\ensuremath{\mathbb{N}}\xspace}
\newcommand{\Natp}{\ensuremath{\mathbb{N}_+}\xspace}
\newcommand{\B}{\ensuremath{\mathbb{B}}\xspace}
\newcommand{\Bst}{\ensuremath{\mathbb{B}^*}\xspace}
\newcommand{\Bsq}{\ensuremath{\mathbb{B}^2}\xspace}
\newcommand{\Bom}{\ensuremath{\mathbb{B}^{\omega}}\xspace}
\newcommand{\Bsqst}{\ensuremath{(\mathbb{B}^2)^*}\xspace}
\newcommand{\Bsqom}{\ensuremath{(\mathbb{B}^2)^{\omega}}\xspace}
\renewcommand{\S}{\ensuremath{\Sigma}\xspace}
\newcommand{\Sst}{\ensuremath{\Sigma^*}\xspace}
\newcommand{\Som}{\ensuremath{\Sigma^{\omega}}\xspace}
\newcommand{\SI}{\ensuremath{\Sigma_I}\xspace}
\newcommand{\SO}{\ensuremath{\Sigma_O}\xspace}
\newcommand{\al}{\ensuremath{\alpha}\xspace}
\newcommand{\be}{\ensuremath{\beta}\xspace}
\newcommand{\ie}{i.e.\xspace}
\newcommand{\cf}{cf.\xspace}
\newcommand{\eg}{e.g.\xspace}
\newcommand{\wlofg}{w.l.o.g.\xspace}
\newcommand{\Inf}{\mathrm{Inf}}
\newcommand{\Index}{\mathrm{index}}
\newcommand{\CFLom}{\ensuremath{\mathrm{CFL}_{\omega}}}
\let\obinom\binom
\renewcommand\binom[2]{
  \Big( { {{#1}} \atop {{#2}} } \Big)
}
\begin{document}
\title[Degrees of Lookahead in Regular~Infinite~Games]{Degrees of Lookahead in Regular~Infinite~Games}

\author[M.~Holtmann]{Michael Holtmann\rsuper a}
\address{{\lsuper a}Lehrstuhl f{\"u}r Informatik 7, RWTH Aachen University}
\email{holtmann@automata.rwth-aachen.de}

\author[{\L}.~Kaiser]{{\L}ukasz Kaiser\rsuper b}
\address{{\lsuper b}LIAFA, CNRS \& Universit{\'e} Paris Diderot -- Paris 7}
\email{kaiser@liafa.univ-paris-diderot.fr}

\author[W.~Thomas]{Wolfgang Thomas\rsuper c}
\address{{\lsuper c}Lehrstuhl f{\"u}r Informatik 7, RWTH Aachen University}
\email{thomas@automata.rwth-aachen.de}

\keywords{automata, model checking, regular infinite games}
\subjclass{D.2.4}

\begin{abstract}
We study variants of regular infinite games where the strict alternation
of moves between the two players is subject to modifications.
The second player may postpone a move for a finite number of steps,
or, in other words, exploit in his strategy some lookahead on the moves
of the opponent. This captures situations in distributed systems, \eg
when buffers are present in communication or when signal transmission
between components is deferred. We distinguish strategies with
different degrees of lookahead, among them being the continuous and
the bounded lookahead strategies. In the first case the lookahead is of finite
possibly unbounded size, whereas in the second case it is of bounded size.
We show that for regular infinite games the solvability by continuous strategies
is decidable, and that a continuous strategy can always be reduced to one
of bounded lookahead. Moreover, this lookahead is at most doubly exponential
in the size of a given parity automaton recognizing the winning condition.
We also show that the result fails for non-regular games
where the winning condition is given by a context-free $\omega$-language.
\end{abstract}

\maketitle

\section{Introduction}\label{sec:introduction}
The algorithmic theory of infinite games is a powerful and flexible framework
for the design of reactive systems (see e.g. \cite{GTW02AutLogInfGam}).
It is well known that, for instance, the
construction of a controller acting indefinitely within its
environment amounts to the computation of a winning strategy in an
infinite game. For the case of regular games, algorithmic solutions
of this synthesis problem have been developed, providing methods for
automatic construction of controllers. The basis of this approach is
the B\"uchi-Landweber Theorem, which says that in a regular infinite
game, \ie a game over a finite arena with a winning condition given by
an $\omega$-regular language, a finite-state winning strategy for the
winner can be constructed \cite{BL69SolSeqCondFinStateStr}.
Much work in the past two decades has been devoted to generalizations of
this fundamental result. The game-theoretic setting is built on
two components, a \emph{game arena} or game graph, representing the
transition structure of a system, and a \emph{winning condition},
usually given by a logic formula or an automata theoretic condition.
Most generalizations address an extension of either of the two, or both.
A rapidly growing literature is thus concerned with the case of infinite game
graphs and non-regular winning conditions \cite{Wal96PushProc,Cach03HighOrdPushAutCaucalHier,BSW03PushGamUnboundRegCond}.

In the present paper we investigate a different kind of generalization of
the basic setting, regarding the possibility to get a lookahead on the moves
of the opponent. To explain this aspect it is convenient to refer to
the simplest format of infinite games, also called Gale-Stewart
games \cite{Mosch80DST}.
In such a game we abstract from arenas but just let the two players choose
letters from a finite alphabet in turn. (For notational convenience let us
only consider the typical case of the Boolean alphabet $\B:=\{0,1\}$.)
A play is built up as a sequence $a_0b_0a_1b_1\cdots$ where
$a_i$ is chosen by one player and $b_i$ by the other.
A natural view is to consider the sequence
$\al=a_0a_1\cdots$ as \emph{input stream} and $\beta=b_0b_1\cdots$ as
\emph{output stream}. Accordingly, the players are called
Player Input and Player Output, or short \pI and \pO.
The play is won by \pO if the $\omega$-word
$\obinom{a_0}{b_0}\obinom{a_1}{b_1}\obinom{a_2}{b_2}\cdots\in\Bsqom$
satisfies the winning condition, \ie if it belongs to a given $\omega$-regular
language $L$. In the classical setting, a strategy for \pO is
a function $f$ that maps a finite input prefix $a_0\cdots a_i$ to
the bit $b_i$ that is to be chosen by \pO. Such a strategy induces an operator
$\lambda:\Bom\to\Bom$ from input streams to output streams.~In this
work we study more generalized operators that correspond to strategies
where the choice of $b_i$ depends on $a_0\cdots a_j$, for $j\neq i$.
We show results on the existence of such strategies
for different conditions on the relation between $i$ and $j$.

There are two motivations for the study of such a generalization,
a practical and a theoretical one. In many scenarios, the occurrence of
delays (say between input and output) is realistic, either as a
modeling assumption or as a feature of strategies. For example,
the design of a controller may involve a buffer that allows to store
a sequence of input bits of some fixed length $d$ such that the bit
$b_i$ of the output sequence is to be delivered with lookahead $d$,
\ie on the basis of the input sequence $a_0\cdots a_{i+d}$. Conversely,
in the context of networked control (\ie systems with components in
different locations), there may be a delay $d$ in the transmission of
data, which means that the delivery of $b_i$ is due at a point where
only the input bits $a_0\cdots a_{i-d}$ are available. It is clear
that the occurrence of lookaheads and delays influences the existence
of solutions. In the first case, we obtain for increasing $d$ an increasing
advantage for the output player, whereas in the second case we obtain
an increasing disadvantage. Observe that the cases are symmetric
in the two players, and thus are mutually reducible.

A more theoretical motivation is to explore more comprehensively
and systematically the solution concepts for infinite games.
The classical concept of a strategy gives a very special kind of operator,
but there are natural options of higher generality, well-known
already from background fields like descriptive set theory and topology \cite{Mosch80DST}.
Let us mention four fundamental levels of operators, corresponding to
different levels of obligation for \pO to move. The most general ones are
the continuous operators (see \eg \cite{TB73FinAutBehSynth,TL93LogSpecInfComp}).
An operator $\lambda$ is continuous (in the Cantor space of infinite
sequences over \B) if in the output sequence $\beta = \lambda(\al)$
the bit $b_i$ is determined by a finite prefix of $\al$.
Referring only to the length of prefixes, we call an operator
uniformly continuous if for some strictly monotone function $h:\Nat\to\Nat$ we have that $b_i$
is determined by $a_0\cdots a_{h(i)}$. For fixed $h$ we then speak of $h$-delay operators.
On a further level of specialization, we are dealing with operators of
bounded delay. These are $h$-delay operators with $h(i)\leq i+d$, for some $d\in\Nat$.
Analogously, if $h(i)=i+d$, then we speak of operators with constant delay $d$, and
finally, the function $h(i)=i$ supplies the operators induced by standard strategies.
All these levels of delay naturally correspond to different types of games;
for example, a continuous strategy involves the moves ``wait'' or
``output~$b$'' after each move of the opponent.

Our main result connects the different kinds of operators
in the context of infinite games. We show that
in a two-person game with regular winning condition, one can decide whether
there is a continuous winning strategy for \pO, and in this case 
a strategy of constant delay can be constructed.
Moreover, one can compute a suitable bound $d$ for the delay.
Thus, in the first mentioned application scenario, if a standard
controller for satisfying a regular specification does not exist then
one can decide whether some finite buffer will help, and determine
the needed size of that buffer. We also show that the result fails when
passing to non-regular specifications. However, which functions may be
appropriate for uniformly continuous strategies in the non-regular
case is left open. It seems that for infinite-state (or non-regular)
games our result can serve as an entry into a much wider field of study
(see \eg the recent work \cite{FLZ11}).

As indicated above, the idea of generalized concepts of strategies is
far from new. An early contribution is found in the (not well-known)
paper of Hosch and Landweber \cite{HL72FinDelSol}. It deals with constant
delay strategies in regular games and exploits a result of Even and Meyer
from Boolean circuit theory
to establish a bound for delays \cite{EM69SeqBoolEq}.
We obtain this result here as a corollary of the main theorem.
The extension of our result over \cite{HL72FinDelSol} covers
three aspects: the connection with strategies of unbounded delay,
a considerably simplified and transparent proof of the Hosch-Landweber-Theorem
(the construction in \cite{HL72FinDelSol} is highly complex),
and finally better complexity bounds for suitable delays.

This paper is organized as follows.
In the next section we introduce notation.
In Section~\ref{sec:operators_games_delay} we present several kinds
of functions and the operators they induce.
We also bridge from continuous operators to delay operators
and introduce games with delay.
In Sections~\ref{sec:block_game}--\ref{sec:connection_block_semigroup_game}
we prove our main result via a two-stage reduction:
In Section~\ref{sec:block_game} we do the first step,
switching over to block games.
In Section~\ref{sec:semigroup_game} we deal with notions
related to semigroups
and define a semigroup game.
This framework is finally used in Section~\ref{sec:connection_block_semigroup_game}
to establish the second step of the reduction, \ie the connection between
block games and the semigroup game.
Sections~\ref{sec:pusdown_games_delay} and \ref{sec:conclusion}
provide evidence that our results cannot be generalized
to $\omega$-context-free specifications and give an outlook
on future investigations.

\section{Preliminaries}\label{sec:preliminaries}
Let \S be a finite \emph{alphabet}.
By \Sst and \Som we denote the sets of finite and infinite \emph{words} over \S.
Usually, finite words are denoted $u,v,\ldots$ whereas $\al,\beta,\ldots$ are infinite words.
By $|u|$ we denote the \emph{length} of $u$ and
$\S^n:=\{u\mid|u|=n\}$ is the set of words of length $n$.
$\Nat$ is the set of natural numbers and $\Natp:=\Nat\setminus\{0\}$.
Given $n_1,n_2\in\Nat$ with $n_1<n_2$
we write $\S^{[n_1,n_2]}$ for $\bigcup_{n_1\leq n\leq n_2}\S^n$.

A \emph{(deterministic) finite automaton}, DFA for short, over \S is a tuple
$\aut{A}=(Q,q_0,\delta,F)$ where $Q$ is a (non-empty) finite set of
\emph{states}, $q_0\in Q$ is the \emph{initial state}, $\delta:Q\times\S\to Q$
is a \emph{transition function}, and $F\subseteq Q$ is a set of
\emph{final states}. The \emph{run} $\rho_u$ of \aut{A} on $u:=u_0\cdots u_{n-1}$
is the finite sequence $\rho_u(0)\cdots\rho_u(n)$ with $\rho_u(0)=q_0$ and
$\rho_u(i+1)=\delta(\rho_u(i),u_i)$ for $i=0,\ldots,n-1$. We define
\aut{A} to \emph{accept} $u$ if and only if $\rho_u(n)\in F$. The
set of all words accepted by \aut{A} is called the
\emph{$*$-language} of \aut{A} and denoted $L_*(\aut{A})$. Later in
our work we need the following basic property of deterministic
finite automata.

\begin{lem}\label{lem:length}
Let \aut{A} be a DFA with $n$ states and $|\LsA|=\infty$. Then, for all
$i\in\Nat$, \aut{A} accepts a word $u_i$ of length $i\leq|u_i|\leq i+n$.
\end{lem}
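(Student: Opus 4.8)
The plan is to argue by a pumping-style analysis of runs of $\aut{A}$. Since $|\LsA|=\infty$ and $\aut{A}$ has only $n$ states, $\aut{A}$ accepts arbitrarily long words, so by the usual pumping argument there is an accepted word $w$ whose run visits some state $q$ twice within the first $n$ steps; write $w = xyz$ with $\delta(q_0,x)=q$, $\delta(q,y)=q$, $1\le|y|\le n$, and $\delta(q,z)\in F$. Then $x y^k z$ is accepted for every $k\ge 0$, and these words have lengths $|xz|, |xz|+|y|, |xz|+2|y|,\dots$, i.e.\ an arithmetic progression with common difference $|y|\le n$ starting at $|xz|$.

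Next I would turn this into the claimed interval statement. Fix $i\in\Nat$. If $i\le |xz|$, I can simply take a sufficiently large power $x y^k z$: the first time the length $|xz|+k|y|$ reaches or exceeds $i$ it overshoots by at most $|y|-1 < n$ (and if $i\le |xz|$ already, we are within the stated bound since $|xz|\le i+n$ need not hold — so this sub-case needs the progression to be refined, see below). The clean way is: among the accepted lengths $|xz|+k|y|$, pick the least one that is $\ge i$; call it $\ell$. Then $i\le \ell$ and, because consecutive accepted lengths differ by $|y|\le n$, we have $\ell \le i + |y| \le i+n$ — \emph{provided} $i$ is at least the smallest length in the progression, i.e.\ $i\ge|xz|$. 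For the finitely many small values $i<|xz|$ one has to work slightly harder.

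To handle all $i$ uniformly, I would instead choose the decomposition more carefully: take a shortest accepted word $w_0$; it has length $<n$ (otherwise its run repeats a state and it is not shortest), so $|w_0|\le n-1$. Now for the loop, use that $\aut{A}$ accepts some word of length $\ge n$, hence one whose run has a repeated state among the first $n$ positions, giving a loop of length $|y|\le n$ reachable from $q_0$ and co-reachable to $F$ as above, with the loop entered after at most $n$ steps. Combining: for every $i$, either $i<n$ and we are done by taking $w_0$ only if $|w_0|\ge i$ — which may fail — so in fact the robust statement is obtained by noting that from the loop decomposition $xy^kz$ the set of accepted lengths contains $\{|xz|+k|y| : k\ge 0\}$ with $|xz|\le$ (length of run before loop $+$ length of $z$). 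The main obstacle is precisely the bookkeeping at the low end: ensuring that the first accepted length is small enough ($\le n$) so that the gap argument covers $i=0,1,\dots$ as well. I expect this is resolved by choosing $w$ among accepted words of length in $[n, 2n)$ (such a word exists since lengths of accepted words are unbounded and cannot skip an entire block of $n$ consecutive values once they exceed the longest "loop-free" accepted length), extracting the loop from its prefix of length $\le n$, so that $|xz| \le n$ and the arithmetic progression of accepted lengths with step $\le n$ starting at $|xz|\le n$ meets every interval $[i, i+n]$.
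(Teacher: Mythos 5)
Your pumping-progression route has a genuine gap at exactly the point you flag, and the fix you propose does not close it. Choosing an accepted word $w$ with $n\le|w|<2n$ and extracting the loop from its prefix of length at most $n$ gives a decomposition $w=xyz$ with $|xy|\le n$ and $1\le|y|\le n$, but it does \emph{not} give $|xz|\le n$: we only know $|xz|=|w|-|y|$, which can be as large as $2n-2$ (take $|w|=2n-1$ and a loop of length $1$). Since every element of the progression $\{|xz|+k|y|\mid k\ge0\}$ is at least $|xz|$, the progression misses the interval $[i,i+n]$ for every $i<|xz|-n$, and the only other accepted length you have in hand, that of a shortest accepted word $w_0$ with $|w_0|\le n-1$, covers only $0\le i\le|w_0|$; when $|w_0|$ is small (e.g.\ $\varepsilon\in\LsA$) the values $i$ with $|w_0|<i<|xz|-n$ remain uncovered. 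Moreover, your existence claim for an accepted word of length in $[n,2n)$ is justified by asserting that accepted lengths cannot skip $n$ consecutive values, which is essentially the lemma being proved. Both points can be repaired by iterating the pumping-down step \emph{per $i$} rather than fixing a single progression: start from an accepted word of length greater than $i+n$ and repeatedly delete a loop found in the length-$n$ prefix of the run (each deletion shortens the word by at most $n$ and is available whenever the length is at least $n$); the first time the length drops to at most $i+n$ it is still greater than $i$.

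For comparison, the paper's proof needs no loops at all: since accepted lengths are unbounded, some accepted word has length at least $i$, so its prefix $u$ of length exactly $i$ leads to a state from which a final state is reachable; appending a shortest word $u'$ from $\delta^*(q_0,u)$ into $F$ (of length at most $n$) gives $uu'\in\LsA$ with $i\le|uu'|\le i+n$. This sidesteps entirely the low-end bookkeeping that causes the trouble in your argument.
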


\begin{proof}
Let \aut{A} be a DFA with $n$ states and $|\LsA|=\infty$. Since \LsA
is infinite it must be possible, for each $i\in\Nat$, to read a word
$u$ of length $i$ such that from $\delta^*(q_0,u)$ a final state is
reachable. Otherwise, the length of words accepted by \aut{A} is
bounded by $i$, which is a contradiction to the infiniteness of
\LsA. Then, from $\delta^*(q_0,u)$ we can reach a final state by a
word $u'$ of length at most $n$. The word $uu'$ is accepted by \aut{A}
and is of length between $i$ and $i+n$.
\end{proof}

A \emph{(deterministic) parity automaton}, DPA for short,
over \S is similar to a DFA, but instead of the set $F$ of final
states it has a \emph{coloring}, \ie a function $c : Q \to \{0,\ldots,m\}$.
A run of a DPA is the natural extension of a run of a DFA to infinite words.
For $\al\in\Som$, the set $\Inf(\rho_{\al})$ is
the set of states visited infinitely often in run $\rho_{\al}$.
We define the parity automaton \aut{A} to accept $\al$ if and only if
$\max(\Inf(c(\rho_{\al})))$ is even, \ie the maximal color seen
infinitely often in the run on $\al$ is even. Accordingly, the
acceptance condition of \aut{A} is called a \emph{max-parity}
acceptance condition. The set of all words accepted by \aut{A} is
called the $\omega$-language of \aut{A} and denoted \LoA.

In the next sections, we write \LA instead of \LsA or \LoA if it is
clear from the context whether \aut{A} is a DFA or DPA. It is
well-known that languages accepted by DPAs are exactly the
\emph{$\omega$-regular} languages (see \eg \cite{GTW02AutLogInfGam}).

A \emph{parity game} $\G = (V,V_I,V_O,E,c)$ is played by two players,
\pI and \pO, on a directed graph $G = (V, E)$:
\begin{iteMize}{$\bullet$}
\item $V = V_I\dotcup V_O$ is a partition of $V$ into positions of \pI and \pO,
\item $E \subseteq V\times V$ is the set of allowed \emph{moves}, and
\item $c:V\to\{0,\ldots,m\}$ is a coloring of $V$
  (w.l.o.g.\ $m\in2\Nat$).
\end{iteMize}
We assume that for each $v \in V$ there is a valid move from $v$,
\ie $vE:=\{w\mid(v,w)\in E\}\neq\emptyset$.
A \emph{play} is an infinite path through $G$. A (standard) \emph{strategy} for \pO is
a function $f:V^*V_O\to V$ defining, for each position of \pO and each history
$v_0\cdots v_k$ of the play, her next move. Thus, for each $v_0\cdots v_k$
(with $(v_i,v_{i+1})\in E$ for all $i=0,\ldots,k-1$) and $v_k\in V_O$, the
function $f$ is defined such that $(v_k,f(v_0\cdots v_k))\in E$.
A play $v_0v_1\cdots$ is \emph{consistent} with the strategy $f$ if for each
$v_i\in V_O$ the next position is given by $f$, \ie $v_{i+1}=f(v_0\cdots v_i)$.

The \emph{parity winning condition} is again defined so that a play
$v_0v_1\cdots$ is winning for \pO if and only if the maximal color occurring
infinitely often in $\{c(v_i) \mid i \in \Nat\}$ is even.
In the other case the play is winning for \pI.
The function $f$ is called a \emph{winning strategy for \pO from $v_0$}
if each play starting in $v_0$ that is consistent with $f$ is winning for \pO,
and analogously for \pI. Parity games, even on infinite graphs, are
\emph{determined}, \ie for each $v$ either \pI or \pO has a winning strategy
from $v$ (see \eg \cite{GTW02AutLogInfGam}).

For the rest of this paper, let us fix $\{0,1\}$ as input and
output alphabet, \ie let $\SI=\SO:=\B$. All the definitions
and results are analogous for other finite alphabets of size
at least two.

\section{Operators and Games with Delay}\label{sec:operators_games_delay}

In this section we introduce different kinds of functions and operators,
and show how they induce games with different degrees of lookahead.
Below, we mostly use the term ``delay'' in place of ``lookahead'',
following \eg \cite{HL72FinDelSol}.

\subsection{Delay Operators}\label{subsec:delay_operators}

Let $\lambda$ denote a function from \Bom to \Bom, also called
an \emph{operator}. We shall distinguish the following classes of operators,
starting form the most general ones.
\begin{enumerate}[(1)]
\item \emph{continuous operators}
\item \emph{uniformly continuous operators}
\item \emph{$h$-delay operators} for a fixed $h:\Nat\to\Nat$
\item \emph{bounded delay operators}
\item \emph{$d$-delay operators} for a fixed $d\in\Nat$
\end{enumerate}

An operator $\lambda$ is continuous if in the output sequence
$\beta = \lambda(\al)$ each bit is determined by a finite prefix of $\al$. 
This condition is equivalent to the standard topological definition,
where $\lambda$ is continuous if the preimage $\lambda^{-1}(U)$ of every open set
$U\subseteq\Bom$ is open in \Bom. Here, open sets in \Bom are given by
the standard Cantor topology, \ie $U\subseteq\Bom$ is open if there exists
$W \subseteq \Bst$ such that $U = \{w\Bom\mid w\in W\}$.
Consult \eg \cite{TL93LogSpecInfComp} for more details.
This topology is induced by the standard metric $\delta$ on \Bom:
\[ \delta(\al,\be) = \begin{cases}
  2^{-\min\{n\mid\al_n\neq\be_n\}} & \text{if }\al\neq\be,\\
  0 & \text{otherwise},
\end{cases} \]
and the standard metric definitions of continuity and uniform continuity
are equivalent to the ones we use. Let us recall here three of these classical
definitions. An operator $\lambda \ :\ \Bom \to \Bom$ is: 
\begin{iteMize}{$\bullet$}
\item \emph{Continuous} if for all $\al,\be\in\Bom$ and each $\epsilon > 0$
  there exists a $\delta > 0$ such that if $\delta(\al,\be) < \delta$ then 
  $\delta(\lambda(\al),\lambda(\be)) < \epsilon$.
\item \emph{Uniformly continuous} if for each $\epsilon > 0$
  there exists a $\delta > 0$ such that for all $\al,\be\in\Bom$
  if $\delta(\al,\be) < \delta$ then 
  $\delta(\lambda(\al),\lambda(\be)) < \epsilon$.
\item \emph{Lipschitz continuous} with constant $C$ if
  for all $\al,\be\in\Bom$ the following holds:
  $\delta(\lambda(\al),\lambda(\be)) \leq C\cdot\delta(\al,\be)$.
\end{iteMize}

\noindent Since we do not use metric properties of the Cantor space,
to formally capture the constraint that each output bit is
determined by a finite prefix of the input, we define
the continuity of $\lambda$ in the following equivalent way.
We use a map $l$ that transforms each input bit into either $0$ or $1$
or $\rhd$, the latter meaning that the production of the next output bit
is still deferred. The value $\lambda(\al)$ is then obtained from the sequence
of $l$-values by deleting all entries $\rhd$.

\begin{defi}\label{def:continuous_to_delay}
An operator $\lambda:\Bom\to\Bom$ is \emph{continuous}
if there exists $l:\Bst\to\{0,1,\rhd\}$
such that for all $\al\in\Bom$ the word
$l(\al):=l(\al_0)l(\al_0\al_1)l(\al_0\al_1\al_2)\cdots$
satisfies the following:
\begin{enumerate}[(1)]
\item $l(\al)$ does not end with $\rhd^\omega$, and
\item $\lambda(\al) = \text{strip}(l(\al))$ where $\text{strip}(l(\al))$ is
the word $l(\al)$ with all $\rhd$ removed.
\end{enumerate}
\end{defi}

Let us now define \emph{$h$-delay} and \emph{uniformly continuous}
operators. Let $h : \Nat\to\Nat$ be a strictly monotone function.
We say that $\lambda$ is an \emph{$h$-delay} operator if,
for each $\al\in\Bom$, the bit $(\lambda(\al))_i$ depends only
on $\al_0\cdots\al_{h(i)}$. An operator $\lambda$ is uniformly continuous
if there exists an $h$ such that $\lambda$ is an $h$-delay operator.
Observe that each uniformly continuous operator is indeed continuous --
the function $h$ supplies the information 
how long the output $\rhd$ should be produced. 

For the space \Bom it is known that
the converse also holds. This is a consequence of K\"onig's Lemma, or
equivalently of the fact that continuous functions
on a closed bounded space are uniformly continuous.

\begin{lem}\label{lem:continuous_to_delay_operator}
For every continuous operator $\lambda:\Bom\to\Bom$ there exists
a strictly monotone function $h:\Nat\to\Nat$ such that $\lambda$ is
an $h$-delay operator.
\end{lem}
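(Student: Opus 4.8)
The plan is to extract a suitable $h$ directly from the function $l:\Bst\to\{0,1,\rhd\}$ furnished by Definition~\ref{def:continuous_to_delay}. For a finite word $w=w_0\cdots w_{n-1}\in\Bst$, write $l^{\mathrm{pref}}(w):=l(w_0)l(w_0w_1)\cdots l(w_0\cdots w_{n-1})$ for the word of length $n$ consisting of the $l$-values on the nonempty prefixes of $w$; note that $l^{\mathrm{pref}}(w)$ is a prefix of $l(\al)$ whenever $w$ is a prefix of $\al$. Fix $i\in\Nat$. I want a length $N_i$ such that for every $\al\in\Bom$ the word $l^{\mathrm{pref}}(\al_0\cdots\al_{N_i-1})$ already contains at least $i+1$ symbols from $\{0,1\}$. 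Once this is shown, the first $i+1$ bits of $\lambda(\al)=\mathrm{strip}(l(\al))$ are determined by $\al_0\cdots\al_{N_i-1}$, so $(\lambda(\al))_i$ depends only on $\al_0\cdots\al_{N_i-1}$.

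To obtain $N_i$, consider the set $T_i:=\{w\in\Bst\mid l^{\mathrm{pref}}(w)\text{ contains at most }i\text{ symbols from }\{0,1\}\}$. If $w'$ is a prefix of $w$, then $l^{\mathrm{pref}}(w')$ is a prefix of $l^{\mathrm{pref}}(w)$, so $T_i$ is closed under taking prefixes; hence $T_i$ is a subtree of the (finitely branching) full binary tree $\Bst$. If $T_i$ were infinite, König's Lemma would yield an infinite branch $\al\in\Bom$ all of whose prefixes lie in $T_i$, and then $l(\al)$ would contain at most $i$ symbols from $\{0,1\}$ in total, i.e.\ it would end with $\rhd^\omega$, contradicting condition~(1) of Definition~\ref{def:continuous_to_delay}. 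Therefore $T_i$ is finite. Since $\epsilon\in T_i$, we may set $g(i):=\max\{\,|w|\mid w\in T_i\,\}$. By construction $\al_0\cdots\al_{g(i)}\notin T_i$ for every $\al$, so $l^{\mathrm{pref}}(\al_0\cdots\al_{g(i)})$ contains at least $i+1$ symbols from $\{0,1\}$; thus $g(i)$ works as $N_i-1$ above, and $(\lambda(\al))_i$ depends only on $\al_0\cdots\al_{g(i)}$.

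Finally, $g$ need not be strictly monotone, so I would put $h(i):=i+\max\{g(0),\dots,g(i)\}$. This $h$ is strictly monotone and satisfies $h(i)\ge g(i)$ for all $i$, whence $(\lambda(\al))_i$ depends only on $\al_0\cdots\al_{h(i)}$; that is, $\lambda$ is an $h$-delay operator, as required.

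The only genuinely delicate point is the König's Lemma step: one must verify that $T_i$ is truly prefix-closed (so that it is a tree) and that an infinite branch of $T_i$ really produces a word whose $l$-image ends in $\rhd^\omega$, contradicting condition~(1). Everything else is routine bookkeeping. (One could instead invoke compactness of the Cantor space $\Bom$ together with uniform continuity of $\lambda$, but the combinatorial argument above stays within the purely prefix-based formulation of continuity used in this section.)
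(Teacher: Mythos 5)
Your proof is correct and is precisely the K\"onig's Lemma argument that the paper itself invokes for this lemma (the paper states the result without spelling out a proof, noting only that it follows from K\"onig's Lemma or, equivalently, from uniform continuity of continuous functions on compact spaces). Your tree $T_i$ is genuinely prefix-closed, its finiteness correctly yields the bound $g(i)$, and the final monotonization $h(i)=i+\max\{g(0),\dots,g(i)\}$ is a valid way to obtain a strictly monotone delay function.
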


By the above lemma, the classes of continuous operators $\Bom\to\Bom$ and
uniformly continuous operators $\Bom\to\Bom$ are exactly the same.
A space where this does not hold is \eg $\mathcal{R}:=\Bom\setminus\{0^{\omega}\}$.
Consider $\lambda_1:\mathcal{R}\to\mathcal{R}$ with
\[ \lambda_1(\al):=\begin{cases}
  01^{\omega} & \text{if }\al=0^*10\be\text{ for some }\be\in\Bom\\
  1^{\omega} & \text{otherwise}
\end{cases} \]
Intuitively, the operator $\lambda_1$ checks if there is $0$ or $1$
after the first $1$ in the input. One can verify that $\lambda_1$ is
a continuous function from $\Bom\setminus\{0^{\omega}\}$ to \Bom, but
it is not uniformly continuous and can not be extended to any continuous
function from \Bom to \Bom. Our results do not hold for such operators:
Already $\lambda_1$ is a counterexample, since it is continuous
but not of bounded delay. Thus, in this paper we adhere to the space \Bom.

Among the uniformly continuous operators, we distinguish an even more
restricted class of bounded delay operators. A function $h:\Nat\to\Nat$
is said to be of \emph{bounded delay} if there exist $i_0,d\in\Nat$
such that $h(i)=i+d$ for all $i\geq i_0$, and it is said to be
a \emph{$d$-delay} function (or a function of \emph{constant} delay $d$)
if $h(i)=i+d$ for all $i\in\Nat$. The induced operators are named accordingly.

In topological terms, bounded delay operators are Lipschitz continuous
functions from \Bom to \Bom, as defined above. The $d$-delay operator is
clearly Lipschitz continuous with constant $C = 2^{d}$.
Conversely, if an operator $\lambda$ is not of bounded delay
then for each $d$ there exists $\al\in\Bom$ and an index $i$
such that the $i$-th bit of $\lambda(\al)$ is not a function of the first
$i+d$ bits of \al. This means that there exists $\be\in\Bom$ with the same
first $i+d$ bits as \al, \ie satisfying $\delta(\al,\be)<2^{-(i+d)}$,
such that $\lambda(\be)$ differs from $\lambda(\al)$ on the $i$-th bit,
therefore $\delta(\lambda(\al),\lambda(\be))\geq2^{-i}$. This contradicts 
Lipschitz continuity as the constant $C$ would have to satisfy $C>2^{d}$,
for all $d\in\Nat$.

In all definitions above, we assume that the delay function
$h$ is strictly monotone. For our purpose it is more convenient
to consider the function $f_h:\Nat\to\Natp$,
denoting the number of additional input bits until the next output bit:
\[ f_h(i) := \begin{cases}
    h(0)+1 & \text{if }i=0\\
    h(i)-h(i-1) & \text{if }i>0
\end{cases} \]
In the next sections, we work only with the functions $f_h$.
Moreover, we use the special notation \const{d} for
the function $f_h$ with $h$ of constant delay $d$:
$\const{d}(0)=d+1$ and $\const{d}(i)=1$ for $i>0$.
From now on, we omit the subscript $h$ in our notation.

\subsection{Regular Games with Delay}\label{subsec:regular_games_delay}

In this section we introduce the regular infinite game \GdL{f}.
It is induced by an $\omega$-language $L$ (usually given by a DPA \aut{A}) over \Bsq, and a function $f:\Nat\to\Natp$.
(Since we focus on the impact of the function $f$,
we omit $L$ if it is clear from the context and write \Gd{f}.)
The function $f$ imposes a delay (or lookahead) on the moves of \pO.
This means that in round $i$ \pI has to choose $f(i)$
many bits, and \pO chooses one bit, afterwards.
This way the players build up two infinite sequences; \pI builds up
$\alpha = a_0 a_1 \cdots$ and \pO builds up $\beta = b_0 b_1 \cdots$,
respectively. The corresponding play is winning for \pO if and only if
the word $\obinom{a_0}{b_0}\obinom{a_1}{b_1}\obinom{a_2}{b_2}\cdots$
is accepted by \aut{A}.
For a DPA \aut{A}, we say that
\LA is \emph{solvable with finite delay}
if and only if there exists $f:\Nat\to\Natp$ such that \pO wins \GdLA{f}
(analogously for restricted classes of functions).

Observe that the possible strategies for \pO in \Gd{f} correspond precisely
to $h$-delay operators, since \pO must output her $i$th bit after receiving
the next $f(i)$ bits of input. Thus, the question whether there exists
an $h$-delay operator $\lambda$ such that 
$\{\obinom{\alpha}{\lambda(\alpha)} \mid \alpha \in \Bom\} \subseteq L(\aut{A})$
is equivalent to the question whether there exists a winning strategy for
\pO in \Gd{f}.

A basic observation is that winning with delay is a monotone property.
For two functions $f,g: \Nat \to \Natp$ we write
$f\sqsubseteq g$ if and only if $f(i)\leq g(i)\text{ for all } i\in\Nat$.
\begin{rem}\label{rem:winning_monotone}
If \pO wins \Gd{f_0} then she also wins \Gd{f} for each $f\sqsupseteq f_0$.
Analogously, if \pI wins \Gd{g_0} then he also wins \Gd{g},
for each $g\sqsubseteq g_0$.
\end{rem}

\begin{exa}\label{ex:game_with_delay}
Let $L\subseteq\Bsqom$ be given by the $\omega$-regular expression
\[ \obinom{0\:a}{a\:*}\Som+\obinom{1**\:b}{b**\:*}\Som \]
where $a,b\in\B$ and $*$ denotes any bit.
If \pI chooses 0 as his first bit then \pO needs to know $a$,
so she needs delay one in this situation.
Contrary, if \pI chooses $1$ as his first bit then \pO needs delay three
to obtain $b$. Thus, she wins the game with delay three,
but neither with delay two nor one.
\end{exa}

In the next sections we prove our main result (see Theorem~\ref{thm:main}):
\emph{Let \aut{A} be a DPA
with $n$ states, $m$ colors, and let $n' := 2^{(mn)^{2n}}$.
Then, there is a continuous operator $\lambda$
with $\obinom{\al}{\lambda(\al)}\in\LA$ (for all $\al\in\Bom$)
if and only if there is a $(2n'-1)$-delay operator with the same property.}
To obtain this result we show that
\LA is solvable with finite delay
if and only if \LA is solvable with delay $2n'-1$.

\section{The Block Game}\label{sec:block_game}
In this section we make the first step in the proof of our main result,
which is to relax the number of bits \pI can choose in each move.
For this reason we introduce a new game \Gp{f}, called the \emph{block game}.

The game \Gp{f} differs from \Gd{f} in two ways.
Firstly, the lengths of the words to be chosen by the players
are decided by \pI, within certain intervals determined by $f$.
Secondly, \pI is one move ahead compared to \Gd{f}.

A play in \Gp{f} is built up as follows: \pI chooses
$u_0\in\B^{[f(0),2f(0)]}$ and $u_1\in\B^{[f(1),2f(1)]}$,
then \pO chooses $v_0\in\B^{|u_0|}$.
In each round thereafter, \ie for $i\geq2$, \pI chooses
$u_i\in\B^{[f(i),2f(i)]}$ and \pO responds
by a word $v_{i-1}\in\B^{|u_{i-1}|}$.
The winning condition is defined as before.

We show that \pI wins the game \Gd{f} for all functions $f$ if and only
if he wins the block game \Gp{f} for all functions $f$. To this end, for
$f:\Nat\to\Natp$, let $f'$ be defined by
$f'(0) := f(0)+f(1)$, and $f'(i) := f(i+1)$ for $i>0$.

\begin{prop}\label{prop:G_f_prime_G_prime_f}
Let $f:\Nat\to\Natp$. If \pI wins \Gd{f'} then he also wins \Gp{f}.
\end{prop}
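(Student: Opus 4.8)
The plan is to show how a winning strategy for \pI in \Gd{f'} can be transformed into a winning strategy for \pI in \Gp{f}. The key point is that \pI is the one who, in \Gp{f}, gets to choose the \emph{lengths} of the blocks $u_i$, subject to $|u_i|\in[f(i),2f(i)]$. So \pI has the freedom to pick these lengths in a way that lets him simulate his \Gd{f'}-strategy internally. Concretely, I would have \pI, while playing \Gp{f}, maintain a ``virtual'' play of \Gd{f'} on the side: he feeds the bits he would output in \Gp{f} into the virtual game as \pI-moves (grouped according to $f'$), reads off \pO's virtual responses, and uses those to decide his next \Gp{f}-moves. The definition $f'(0)=f(0)+f(1)$, $f'(i)=f(i+1)$ for $i>0$ is exactly tailored so that the first virtual \pI-move of length $f'(0)$ corresponds to the concatenation $u_0u_1$ (which \pI produces before \pO moves at all in \Gp{f}, reflecting the ``one move ahead'' feature), and each subsequent virtual move of length $f'(i)=f(i+1)$ corresponds to $u_{i+1}$.

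First I would set up the simulation carefully, tracking the bijection between positions in the two plays. In \Gp{f}, after \pI has played $u_0,\dots,u_k$ and \pO has played $v_0,\dots,v_{k-1}$, the finite word built so far over \Bsq has its ``input track'' equal to $u_0\cdots u_k$ and its ``output track'' a prefix of $v_0\cdots v_{k-1}$ padded appropriately. In the virtual \Gd{f'} play, I want \pI to have played input blocks whose concatenation is $u_0u_1u_2\cdots u_k$ and \pO to have responded with output bits whose concatenation is exactly what \pI should output as $v_0v_1\cdots v_{k-2}$ in \Gp{f}. The subtlety is the block \emph{lengths}: in \Gd{f'}, \pI's $j$th block has fixed length $f'(j)$, but in \Gp{f} the blocks $u_i$ have variable length in $[f(i),2f(i)]$, so the two groupings do not line up block-for-block. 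This is fine, though: what matters is only the concatenated input word and the concatenated output word, since the winning condition depends only on the resulting $\omega$-word over \Bsq, and a \Gd{f'}-strategy for \pI is in particular a function that, given any input prefix of the right cumulative length, tells \pI which bits come next.

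So the core of the argument is: \pI in \Gp{f} chooses each $u_i$ by querying his \Gd{f'}-strategy on the cumulative input produced so far, after having fed in the \pO-responses from \Gp{f} as the corresponding \pO-bits of the virtual play. Because \pI is a full move ahead in \Gp{f} (he supplies $u_0,u_1$ before \pO supplies $v_0$), and because $f'(0)=f(0)+f(1)$, at every stage \pI has already fed into the virtual game all the input bits his \Gd{f'}-strategy needs in order to have determined the \pO-output bits that \pI must now emit as $v_{i-1}$; conversely, the virtual \pO has by then received enough input to be ``obligated'' to have produced those bits. One must check the length bookkeeping matches up — that the interval $[f(i),2f(i)]$ always contains a legal choice consistent with the virtual blocking — but since the virtual blocks have lengths $f'(j)\ge 1$ and \pI controls the \Gp{f}-block lengths within a factor-two window, there is always enough slack; indeed \pI can even just always take $|u_i|=f(i)$ and regroup, the only real content being that the \emph{set} of input bits seen by the virtual \pO-strategy before each of its moves is a prefix of, and eventually exhausts, what has been played in \Gp{f}.

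Finally I would argue the simulated play is winning for \pI: the concatenation of all $\binom{u_i}{v_{i-1}}$-pairs built in \Gp{f} equals, track by track, the $\omega$-word built in the virtual \Gd{f'} play (up to the harmless constant shift coming from \pI being one move ahead, which affects no bit's value, only when it is revealed), and the latter is winning for \pI by assumption. Hence the induced play in \Gp{f} satisfies the negation of the winning condition, i.e.\ is winning for \pI, and \pI's strategy is a genuine \Gp{f}-strategy since every move is well-defined and length-legal. The main obstacle I anticipate is purely notational: keeping the index shift (the ``one move ahead'') and the mismatch between fixed virtual block lengths $f'(j)$ and variable real block lengths $|u_i|$ from colliding. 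The cleanest way to handle this is to phrase the simulation in terms of the two concatenated words rather than block-by-block, and to observe explicitly that \pI's \Gd{f'}-strategy, read as a map on cumulative input prefixes, can be consulted at any length, not just at the ``official'' \Gd{f'} round boundaries.
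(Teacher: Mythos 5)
Your proposal follows the same route as the paper's proof: split the first \Gd{f'}-block $u_0$ (of length $f'(0)=f(0)+f(1)$) into $u'_0u'_1$, relay each block answer of \pO in \Gp{f} bit by bit as her single-bit moves in a virtual play of \Gd{f'}, buffer the virtual \pI-blocks thereby produced and dispense them as the $u'_i$ (taking $|u'_i|=f(i)$), and conclude that both plays spell out the same $\omega$-word over \Bsq, so \pI wins. Two points need tightening. First, the middle of your argument swaps the players' roles: you speak of ``the \pO-output bits that \pI must now emit as $v_{i-1}$'' and of ``reading off \pO's virtual responses'', but $v_{i-1}$ is \pO's move, the virtual \pO has no strategy of her own, and the data flow is one-directional: the real $v'_{i-1}$ goes \emph{into} the virtual game as $|u'_{i-1}|$ single-bit moves, and virtual \pI-blocks come \emph{out}. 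Second, ``there is always enough slack'' is precisely where the shift $f'(i)=f(i+1)$ is used, and the reasons you give ($f'(j)\ge 1$ and the factor-two window, which plays no role in this direction) are not the operative ones --- note $f$ need not be monotone. The correct accounting is the paper's: once $v'_0$ of length $|u'_0|\ge f(0)\ge 1$ has been relayed, the strategy has produced $f'(1)+\cdots+f'(|u'_0|)\ge f'(1)=f(2)$ fresh input bits, enough for $u'_2$; inductively, before $u'_{i+1}$ is due at least $\sum_{j=0}^{i-1}f(j)\ge i$ virtual rounds have been simulated, yielding at least $f(2)+\cdots+f(i+1)$ bits against the $f(2)+\cdots+f(i)$ already dispensed. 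With these repairs your argument coincides with the paper's.
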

\begin{proof}
Assume \pI has a winning strategy in \Gd{f'}. For $i\in\Nat$, let
$u_i$ be the words chosen by \pI in \Gd{f'} and $u'_i$ the words
chosen by \pI in \Gp{f}, and analogously $v_i,v'_i$ for \pO.
The winning strategy yields $u_0\in\B^{f'(0)}$ as \pI's first move.
Since $f(0)+f(1)=f'(0)$ we can choose $u'_0u'_1=u_0$ as \pI's first move
in \Gp{f}. \pO answers by $v'_0\in\B^{|u'_0|}$. We can use $v'_0$ to
simulate the moves $v_0,\ldots,v_{|v'_0|-1}$ of \pO in \Gd{f'},
each of which consists of one bit. \pI answers by
$u_1,\ldots,u_{|v'_0|}$ of lengths $f'(1),\ldots,f'(|v'_0|)$.
Since $|v'_0| \geq 1$, the sum $f'(1)+\cdots+f'(|v'_0|)$ is non-empty and
at least $f'(1)=f(2)$. Accordingly, the word $u_1\cdots u_{|v'_0|}$ is
long enough to give $u'_2$ with $f(2) \leq|u'_2| \leq 2f(2)$.
We choose $u'_2$ as the prefix of $u_1\cdots u_{|v'_0|}$ of length $f(2)$.
\pO answers in \Gp{f} by $v'_1$ of length $|u'_1|$, and we can use it to
simulate another $|v'_1|$ rounds in \Gd{f'}. Thereby, we obtain enough
bits to give $u'_3$, and so on. This way, we build up the same plays
in \Gd{f'} and \Gp{f}. Since \pI wins \Gd{f'}, he also wins \Gp{f}.
\end{proof}

For $f:\Nat\to\Natp$, let $f''$ be inductively defined by $f''(0):=f(0)$
and \[ f''(i+1) \, := \, \sum\limits_{j=0}^{2(f''(0)+\ldots+f''(i))}f(j). \]
\begin{prop}\label{prop:G_prime_f_prime_prime_G_f}
Let $f:\Nat\to\Natp$. If \pI wins \Gp{f''} then he also wins \Gd{f}.
\end{prop}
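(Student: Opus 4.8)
The plan is to mirror the simulation strategy used in the proof of Proposition~\ref{prop:G_f_prime_G_prime_f}, but in the opposite direction: a winning strategy for \pI in the block game \Gp{f''} will be transformed into a winning strategy for \pI in the ordinary delay game \Gd{f}. The key point is that $f''$ is defined so that a single block move $u_i$ of \pI in \Gp{f''} is long enough to be ``unpacked'' into a long run of single-block moves of \pI in \Gd{f}, and conversely the single-bit answers of \pO in \Gd{f} can be bundled into one block answer $v_i$ in \Gp{f''}. So the play in \Gd{f} is just a finer-grained presentation of the same play in \Gp{f''}, hence wins for \pI transfer.

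More concretely, suppose \pI has a winning strategy in \Gp{f''}. In \Gd{f}, whenever it is \pI's turn, he maintains a buffer of input bits that the block strategy has already committed to but that have not yet been ``spent'' in \Gd{f}. The block strategy first delivers $u_0\in\B^{[f''(0),2f''(0)]}$ and $u_1\in\B^{[f''(1),2f''(1)]}$; since $f''(0)=f(0)$, the first $f(0)$ bits of $u_0$ are \pI's first move in \Gd{f}, and the remaining bits (of $u_0$ and of $u_1$) go into the buffer. Now \pO answers in \Gd{f} with single bits $b_0,b_1,\dots$; after \pI has released all of $u_0$ (which takes at most $2f''(0)/f(0)\le\dots$ rounds, the exact count being controlled by the summation bound in the definition of $f''$), the collected answers $b_0\cdots b_{|u_0|-1}$ form $v_0\in\B^{|u_0|}$, which is fed back to the block strategy as \pO's first block move. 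The block strategy responds with $u_2\in\B^{[f''(2),2f''(2)]}$; the crucial inequality is that $f''(i+1)=\sum_{j=0}^{2(f''(0)+\dots+f''(i))}f(j)$ is large enough that, together with whatever is left in the buffer, \pI can keep feeding \pO the required $f(k)$-bit moves in \Gd{f} for all the rounds $k$ needed until the next block $v$ of \pO is complete. Iterating, \pI builds up in \Gd{f} exactly the letter sequence $\al$ he would build up in \Gp{f''}, and \pO builds up exactly the $\beta$ she would build up there; since the winning condition depends only on $\obinom{\al}{\beta}$, \pI wins \Gd{f}.

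The main obstacle is the bookkeeping that shows the buffer never ``runs dry'': one has to verify that after \pI has released $u_0,\dots,u_{i}$ from the block game and is waiting for \pO's block answer $v_i$, the number of $\B^{f(k)}$-moves \pI must still make in \Gd{f} before $v_i$ is assembled is bounded by the number of input bits already committed by the block strategy (sitting in the buffer). This is precisely what the index bound $2(f''(0)+\dots+f''(i))$ in the recursive definition of $f''$ is engineered to guarantee: it covers the worst case in which every block $u_j$ has maximal length $2f''(j)$ and every $f(k)$ is as large as it can be in the relevant range, so that the prefix $\al_0\cdots\al_{h(k)}$ needed to determine $b_k$ is already available. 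I would present this as an induction on the block round $i$, carrying as the induction hypothesis the invariant ``at the start of block round $i$ the buffer contains at least [enough] bits, and the plays in \Gd{f} and \Gp{f''} constructed so far project to the same word over $\B^2$.'' Since $f''\sqsupseteq f$ is not needed (only the finer/coarser relationship matters), Remark~\ref{rem:winning_monotone} is not used here; what is used is only that the winning condition is defined identically in both games.
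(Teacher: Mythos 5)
This is essentially the paper's own proof: \pI's block moves $u'_0,u'_1,u'_2,\dots$ in \Gp{f''} serve as a buffer of input bits from which his $f(k)$-bit moves in \Gd{f} are cut, \pO's single-bit answers are bundled into the block answers $v'_i$, and the bound $f''(i+1)=\sum_{j=0}^{2(f''(0)+\cdots+f''(i))}f(j)$ together with $|u'_j|\leq 2f''(j)$ is exactly what shows each new block covers all the \Gd{f}-rounds needed until the next $v'_i$ is assembled. The one imprecision is the trigger for assembling $v'_0$: it is complete not ``after \pI has released all of $u'_0$'' but after $|u'_0|$ rounds of \Gd{f} (one answer bit per round), which consume about $f(0)+\cdots+f(|u'_0|)$ input bits and hence already require $u'_1$ to be in the buffer --- but your stated invariant accounts for precisely this, so the argument goes through as in the paper.
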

\begin{proof}
Assume \pI has a winning strategy in \Gp{f''}. For $i\in\Nat$, let
$u'_i$ be the words chosen by \pI in \Gp{f''} and $u_i$ the words
chosen by \pI in \Gd{f}, and analogously $v'_i,v_i$  for \pO.
\pI's winning strategy yields $u'_0\in\B^{[f''(0),2f''(0)]}$ and
$u'_1\in\B^{[f''(1),2f''(1)]}$ as his first move in \Gp{f''}.
For $i\in\Nat$, let $d'_i$ be the length of $u'_i$.
Since
\[ d'_0+d'_1\geq f''(0)+f''(1)=f(0)+\sum\limits_{j=0}^{2f''(0)}f(j), \]
we can give the moves $u_0,\ldots,u_{d'_0}$ of \pI in \Gd{f}. This
yields \pO's answers $v_0,\ldots,v_{d'_0-1}$, \ie $d'_0$ bits. We
can use them to simulate $v'_0$, \ie \pO's first move in
\Gp{f''}. \pI's winning strategy yields $u'_2$ of length $f''(2)\leq
d'_2\leq2f''(2)$. We need to give another $d'_1$ moves of \pI in
\Gd{f} to obtain \pO's answers $v_{d'_0},\ldots,v_{d'_0+d'_1-1}$. For
that we need $f(d'_0+1)+\ldots+f(d'_0+d'_1)$ bits. With $u'_2$ in our
hands we can give these moves, because
\begin{center}
\begin{tabular}{rcccl}
$d'_2$ & $\geq$ & $f''(2)$ & $=$ & $f(0)+\ldots+f(2f''(0)+2f''(1))$\\
	& & & $\geq$ & $f(0)+\ldots+f(d'_0+d'_1)$\\
	& & & $\geq$ & $f(d'_0+1)+\ldots+f(d'_0+d'_1)$.
\end{tabular}
\end{center}
Iterating this we obtain the same plays built up in \Gp{f''} and
\Gd{f}. Since \pI wins \Gp{f''}, he also wins \Gd{f}.
\end{proof}

The following corollary of Propositions \ref{prop:G_f_prime_G_prime_f} and
\ref{prop:G_prime_f_prime_prime_G_f}, which follows by taking functions of
the form $f'$ in the one direction and of the form $f''$ in the other,
is the first step in our proof.

\begin{cor}\label{cor:equivalence_all_Gf_all_G_prime_f}
Let \aut{A} be a DPA. Then the following are equivalent:
\begin{enumerate}[\em(1)]
\item For all $f:\Nat\to\Natp$ \pI wins \GdLA{f}.
\item For all $f:\Nat\to\Natp$ \pI wins \GpLA{f}.
\end{enumerate}
\end{cor}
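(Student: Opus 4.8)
The plan is to derive Corollary~\ref{cor:equivalence_all_Gf_all_G_prime_f} purely as a formal consequence of the two preceding propositions, using the monotonicity Remark~\ref{rem:winning_monotone} implicitly through the fact that the statements quantify over \emph{all} functions $f:\Nat\to\Natp$. No new game-theoretic argument is needed; the work is bookkeeping about the maps $f\mapsto f'$ and $f\mapsto f''$.

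For the direction $(1)\Rightarrow(2)$, I would fix an arbitrary $f:\Nat\to\Natp$ and show \pI wins \GpLA{f}. Since $f'$ (with $f'(0)=f(0)+f(1)$ and $f'(i)=f(i+1)$ for $i>0$) is again a function $\Nat\to\Natp$, hypothesis~(1) applied to $f'$ gives that \pI wins \GdLA{f'}. Proposition~\ref{prop:G_f_prime_G_prime_f} then yields that \pI wins \GpLA{f}. As $f$ was arbitrary, (2) holds. For the direction $(2)\Rightarrow(1)$, I would symmetrically fix an arbitrary $f:\Nat\to\Natp$; the function $f''$ defined inductively by $f''(0)=f(0)$ and $f''(i+1)=\sum_{j=0}^{2(f''(0)+\ldots+f''(i))}f(j)$ is a well-defined element of $\Nat^{\Natp}$ (each value is a finite sum of positive integers, hence positive), so hypothesis~(2) applied to $f''$ gives that \pI wins \GpLA{f''}, and Proposition~\ref{prop:G_prime_f_prime_prime_G_f} then yields that \pI wins \GdLA{f}. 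Again $f$ was arbitrary, so (1) holds.

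There is essentially no hard part here, which is why the excerpt phrases it as ``the following corollary \ldots\ which follows by taking functions of the form $f'$ in the one direction and of the form $f''$ in the other.'' The only point meriting a line of care is that the constructions $f'$ and $f''$ genuinely land in $\Nat\to\Natp$ rather than merely $\Nat\to\Nat$, so that the universal hypotheses can legitimately be instantiated at them — this is immediate since $f'(0)\geq f(0)\geq 1$, $f'(i)=f(i+1)\geq 1$, and every $f''(i+1)$ is a nonempty sum of values of $f$, each at least $1$. With that observed, the proof is the two-line chaining of the propositions described above.

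\begin{proof}
$(1)\Rightarrow(2)$: Let $f:\Nat\to\Natp$ be arbitrary and let $f'$ be defined by $f'(0):=f(0)+f(1)$ and $f'(i):=f(i+1)$ for $i>0$. Then $f':\Nat\to\Natp$, so by~(1) \pI wins \GdLA{f'}. By Proposition~\ref{prop:G_f_prime_G_prime_f}, \pI wins \GpLA{f}. Since $f$ was arbitrary, (2) holds.

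$(2)\Rightarrow(1)$: Let $f:\Nat\to\Natp$ be arbitrary and let $f''$ be defined by $f''(0):=f(0)$ and $f''(i+1):=\sum_{j=0}^{2(f''(0)+\ldots+f''(i))}f(j)$. Each value $f''(i+1)$ is a nonempty sum of positive integers, so $f'':\Nat\to\Natp$; by~(2) \pI wins \GpLA{f''}. By Proposition~\ref{prop:G_prime_f_prime_prime_G_f}, \pI wins \GdLA{f}. Since $f$ was arbitrary, (1) holds.
\end{proof}
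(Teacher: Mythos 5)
Your proof is correct and is exactly the argument the paper intends: the paper gives no separate proof beyond the remark that the corollary ``follows by taking functions of the form $f'$ in the one direction and of the form $f''$ in the other,'' and your instantiation of the universal hypotheses at $f'$ and $f''$ followed by Propositions~\ref{prop:G_f_prime_G_prime_f} and \ref{prop:G_prime_f_prime_prime_G_f} is precisely that. The added check that $f'$ and $f''$ land in $\Nat\to\Natp$ is a reasonable (and correct) bit of diligence.
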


\section{The Semigroup Game}\label{sec:semigroup_game}

In this section we introduce a game which is independent of particular delays.
To define it, we extract from a DPA \aut{A} two equivalence relations,
one for each player, such that the moves of the players are equivalence classes of
these relations. The first one (for \pO) is denoted $\sim$ and induces
a finite semigroup on~\Bsqst. The second one (for \pI)
is denoted $\approx$ and ranges over \Bst.
Roughly speaking, two (pairs of) words are equivalent
if they effect the same behavior on~\aut{A}.

Our approach to transform parity automata into finite semigroups
is similar to the constructions presented in \cite{PP95SemInfinite,Pin95FiniteSem}.
Let $\aut{A}=(Q,q_0,\delta,c)$ be a DPA over \Bsq.
We use the semiring $\mathcal{S}:=(\{\bot\}\cup c(Q),+,\cdot)$ in
which addition is defined as maximum, \ie $x+y:=\max(x,y)$ with $\bot$
being the least element, and multiplication is defined as follows:
\[ x\cdot y:=\begin{cases}
    \max(x,y) & \text{if }x\neq\bot\text{ and }y\neq\bot\\
    \bot & \text{otherwise}
  \end{cases}
\]
Note that the set $\Leq:=\Bsqst$, \ie the set of pairs of words of equal length, is a regular language.
With each pair $\obinom{u}{v} \in \Leq$ we associate a matrix $\mu\obinom{u}{v}$
of size $|Q|^2$ with entries in $\mathcal{S}$, \ie $\mu\obinom{u}{v}\in\mathcal{S}^{Q\times Q}$,
defined as follows:
\[ \mu\obinom{u}{v}_{p,q} := \begin{cases}
      \bot & \text{if }\delta^*\Big(p,\binom{u}{v}\Big)\neq q\\
      \max \{c(\pi)\} & \text{if }\delta^*\Big(p,\binom{u}{v}\Big)=q
        \text{ and }\pi\text{ is the associated \aut{A}-path}
   \end{cases}
\]
Observe that $\mathcal{S}^{Q\times Q}$ induces a finite semigroup and
$\mu\obinom{u}{v} \cdot \mu\obinom{u'}{v'} = \mu\obinom{uu'}{vv'}$.
Let $\sim$ be the equivalence relation on \Leq defined by:
$\obinom{u}{v}\sim\obinom{u'}{v'}$ if and only if $\mu\obinom{u}{v}=\mu\obinom{u'}{v'}$.
For each $\obinom{u}{v}$, the equivalence class $\big[\obinom{u}{v}\big]$ is identified by
a matrix $\mu \in \mathcal{S}^{Q\times Q}$. Since $\mathcal{S}$ and $Q$
are finite, $\mathcal{S}^{Q\times Q}$ is finite as well, and so the
relation $\sim$ has finite index, \ie it has finitely many equivalence classes.
We denote the index of $\sim$ by $\Index(\sim)$. Note that $\Leq/_{\sim}$
induces a finite semigroup, and $\mu$ is a semigroup morphism from
$(\Leq/_{\sim},\cdot)$ to $(\mathcal{S}^{Q\times Q},\cdot)$.

\begin{lem}\label{lem:sim_regular_class}
Let $\obinom{u}{v}\in\Leq$. Then, the set $\big[\obinom{u}{v}\big]$
is a regular $*$-language over \Bsq.
\end{lem}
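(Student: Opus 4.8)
The plan is to show that $\big[\obinom{u}{v}\big]$, the $\sim$-class of a pair of equal-length words, is recognized by a suitable DFA over the alphabet $\Bsq$. The class is determined by a single matrix $\mu := \mu\obinom{u}{v} \in \mathcal{S}^{Q\times Q}$, so it suffices to build, for each matrix $M \in \mathcal{S}^{Q\times Q}$, a DFA accepting exactly the pairs $\obinom{w_1}{w_2} \in \Leq$ with $\mu\obinom{w_1}{w_2} = M$; then take the one for $M = \mu$.

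First I would take as state set the finite semigroup $\mathcal{S}^{Q\times Q}$ together with an extra initial state, or more simply observe that since $\Leq = \Bsqst$ is the set of \emph{all} words over $\Bsq$, there is no length-matching constraint to enforce and we can just read input letters $\obinom{a}{b} \in \Bsq$ directly. The automaton $\aut{B}_M$ has state set $\mathcal{S}^{Q\times Q} \dotcup \{q_{\mathrm{init}}\}$; from $q_{\mathrm{init}}$ on letter $\obinom{a}{b}$ it goes to the single-letter matrix $\mu\obinom{a}{b}$, and from a state $N$ on letter $\obinom{a}{b}$ it goes to $N \cdot \mu\obinom{a}{b}$. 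By the multiplicativity $\mu\obinom{u}{v}\cdot\mu\obinom{u'}{v'} = \mu\obinom{uu'}{vv'}$ established just above, an easy induction on word length shows that after reading a nonempty word $\obinom{w_1}{w_2}$ the automaton is in state $\mu\obinom{w_1}{w_2}$. Declaring the single state $M$ to be final (and $q_{\mathrm{init}}$ final as well only in the degenerate case $M$ is the image of the empty word, which we can ignore since $u,v$ have positive length in the intended application, or handle separately), $\aut{B}_M$ accepts exactly $\{\obinom{w_1}{w_2} \mid \mu\obinom{w_1}{w_2} = M\}$, which for $M = \mu\obinom{u}{v}$ is precisely $\big[\obinom{u}{v}\big]$.

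An alternative phrasing, which is perhaps cleaner and avoids the empty-word bookkeeping, is to invoke the Myhill--Nerode characterization: the relation $\sim$ is a congruence of finite index on $\Bsqst$ (it has finitely many classes, as argued in the text, and it is a congruence because $\mu$ is a semigroup morphism, so $\obinom{u}{v}\sim\obinom{u'}{v'}$ implies $\obinom{u}{v}\obinom{x}{y} \sim \obinom{u'}{v'}\obinom{x}{y}$ and symmetrically on the left). A language that is a union of classes of a finite-index congruence saturating it is regular, and each single class is such a union, hence regular. Either route is short.

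I do not expect a genuine obstacle here; the only point requiring the smallest amount of care is the boundary behaviour at the empty word (whether $q_{\mathrm{init}}$ should be accepting), and this is irrelevant for the use made of the lemma since the pairs $\obinom{u}{v}$ that arise later as player moves are nonempty. The substantive input — multiplicativity of $\mu$ and finiteness of $\mathcal{S}^{Q\times Q}$ — is already in hand from the paragraph preceding the lemma, so the proof is essentially an unwinding of definitions.
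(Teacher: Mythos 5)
Your proof is correct, and it reaches the conclusion by a slightly different route than the paper. The paper constructs, for each triple $(p,q,k)$, a small automaton $\aut{A}_{p,q,k}$ with state set $c(Q)\times Q$ that simulates $\aut{A}$ from $p$ while memorizing the maximal color seen, and then obtains an automaton for $\big[\obinom{u}{v}\big]$ as the intersection of the $\aut{A}_{p,q,k}$ over all $(p,q,k)$ with $\mu\obinom{u}{v}_{p,q}=k$ (determinism of $\aut{A}$ makes it unnecessary to constrain the $\bot$-entries separately). You instead build a single DFA whose states are the semigroup elements $\mathcal{S}^{Q\times Q}$ themselves, updated by right-multiplication with the one-letter matrices; the reachable part of the paper's product automaton is essentially this transition-semigroup automaton, so the two constructions compute the same information, but your packaging makes the correctness argument a one-line induction from multiplicativity of $\mu$ rather than a correctness claim for each $\aut{A}_{p,q,k}$ plus closure under intersection. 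Your Myhill--Nerode variant (each class of a finite-index congruence is regular, and $\sim$ is a congruence since $\mu$ is a morphism into a finite semigroup) is also valid and is the most economical formulation. The one point where the paper's version has a mild practical edge is that its explicit intersection of automata of size $mn$ is what feeds the later size analysis ($\aut{A}_{[\obinom{u}{v}]}$ with at most $(mn)^n$ states); your construction gives a comparable bound, since only matrices with one non-$\bot$ entry per row are reachable, but you would need to say so. Your handling of the empty word is a genuine but harmless boundary case, as you note.
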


\begin{proof}
We construct an automaton recognizing $\big[\obinom{u}{v}\big]$ as follows:
First, we construct for all $p,q\in Q,k\in c(Q)$ the automaton
$\aut{A}_{p,q,k}$ recognizing the set of all words that induce a path
from $p$ to $q$ in \aut{A} where $k$ is the highest color seen on that
path. The idea for this construction is to simulate the behavior of
\aut{A} while memorizing the highest color seen. To this end, define
$\aut{A}_{p,q,k}:=(c(Q)\times Q,\Bsq,(c(p),p),\delta',\{(k,q)\})$
where
\[ \delta'\Big((k',p'),\ \binom{x}{y}\Big) :=
  \Big(\max \Big\{ k',c\Big( \delta\Big(p', \binom{x}{y}\Big)\Big) \Big\},\
      \delta \Big(p', \binom{x}{y}\Big)\Big)
\]
for all $k'\in c(Q),p'\in Q,x,y\in\B$. The automaton starts in the state
$(c(p),p)$ and simulates the behavior of \aut{A} on its input. If it
stops in state $(k,q)$ then it accepts. The automaton $\aut{A}_{[\obinom{u}{v}]}$
is then obtained as the intersection of all $\aut{A}_{p,q,k}$ for $p,q,k$
such that  $\mu\obinom{u}{v}_{p,q}=k$. 
\end{proof}

Since $\sim$ has finite index, we can find automata for all equivalence
classes of $\sim$ in the following way: For $r\in\Nat$, let
$\aut{A}_1,\ldots,\aut{A}_r$ be the automata already constructed.
Then $\sim$ has index $r$ if and only if
$\bigcup_{i=1,\ldots,r}L(\aut{A}_i)=\Leq$.
This equality can be effectively checked, and if this test fails,
then we repeat the construction with a word contained in
$\Leq\setminus\bigcup_{i=1,\ldots,r}L(\aut{A}_i)$.

Let $\approx$ be
the equivalence relation on \Bst defined by
\[
 u \approx u' :\iff \forall \Big[\binom{u_0}{v_0}\Big]:
   \Big( \exists v: \binom{u}{v} \in \Big[ \binom{u_0}{v_0} \Big]
     \iff \exists v': \binom{u'}{v'} \in \Big[ \binom{u_0}{v_0} \Big]\Big).
\]
For $u\in\Bst$, the $\approx$-equivalence class of $u$, denoted $[u]$,
can be identified with a subset of the set of all $\sim$-classes.
Since $\sim$ has finite index, we get that $\approx$ has finite index as well;
more precisely it holds $\Index(\approx)\leq2^{\Index(\sim)}$.

\begin{lem}\label{lem:approx_regular_class}
Let $u\in\Bst$. Then, the set $[u]$ is a regular $*$-language over \B.
\end{lem}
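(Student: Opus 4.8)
The plan is to realise $[u]$ as a finite Boolean combination of regular languages obtained from the $\sim$-classes, all of which are already known to be regular by Lemma~\ref{lem:sim_regular_class}.

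First I would observe that the $\approx$-class of $u$ is determined by a finite amount of data. For a $\sim$-class $C$ over \Bsq, write
\[ P_C \;:=\; \Big\{\, w \in \Bst \;\Big|\; \exists v,\ \binom{w}{v} \in C \,\Big\} \]
for the projection of $C$ onto its first component. Unfolding the definition of $\approx$, we have $u \approx u'$ exactly when $u$ and $u'$ lie in the same sets $P_C$, i.e.\ when $S_u = S_{u'}$ where $S_u := \{\, C \mid u \in P_C \,\}$. Since $\sim$ has finite index there are only finitely many $\sim$-classes, and hence
\[ [u] \;=\; \bigcap_{C \in S_u} P_C \;\cap\; \bigcap_{C \notin S_u} \big(\Bst \setminus P_C\big), \]
a finite intersection (ranging over the finitely many $\sim$-classes) of the languages $P_C$ together with their complements.

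Second I would check that each $P_C$ is regular. By Lemma~\ref{lem:sim_regular_class} there is a DFA over \Bsq recognising $C$ (namely $\aut{A}_{[\binom{u}{v}]}$); relabelling each of its transitions $\binom{x}{y}$ by $x$ yields a nondeterministic finite automaton over \B which, reading $w$, accepts iff some lift $\binom{w}{v}$ was accepted by the original automaton, i.e.\ it recognises exactly $P_C$. Determinising, $P_C$ is regular. (This is just the standard closure of the regular $*$-languages under length-preserving homomorphic images.) As the regular $*$-languages over \B are also closed under complementation and finite intersection, the displayed formula shows that $[u]$ is regular.

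The only step that needs a little care — and it is the ``hard part'' only in a very mild sense — is this projection: one must note that any pair in $C$ has two components of equal length, so the relabelling is length-preserving and the guessing automaton behaves as intended. Everything else is routine bookkeeping with the finitely many $\sim$-classes; in particular one may simply recycle the automata $\aut{A}_1,\ldots,\aut{A}_r$ for the $\sim$-classes produced in the construction following Lemma~\ref{lem:sim_regular_class}.
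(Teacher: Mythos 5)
Your proof is correct and is essentially the paper's own argument: the paper likewise forms, for each $\sim$-class, the projection automaton on the first component (your $P_C$), checks which of these accept $u$, and writes $[u]$ as the intersection of those languages with the complements of the remaining ones. The only cosmetic difference is that you make the set $S_u$ and the length-preservation of the projection explicit, which the paper leaves implicit.
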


\proof
We construct an automaton recognizing the language $[u]$ as follows:
First, we have to check for which $\sim$-classes $\big[\obinom{u_0}{v_0}\big]$
there exists $v\in\B^{|u|}$ such that
$\obinom{u}{v} \in \big[ \obinom{u_0}{v_0} \big]$. Let \aut{B} be a DFA
recognizing $\big[\obinom{u_0}{v_0} \big]$. We take the projection on the
first component (deleting the second component from the transitions of \aut{B})
and test whether the resulting automaton, say $\aut{B}'$, accepts $u$.
If we do the same for all $\sim$-classes,
then we obtain $r$ automata $\aut{B}'_1,\ldots,\aut{B}'_r$ accepting
$u$, and $s$ automata $\aut{B}'_{r+1},\ldots,\aut{B}'_{r+s}$ not accepting $u$,
where $r+s = \Index(\sim)$. From these automata we can effectively construct
an automaton for $[u]$, because \label{pageref:definition_L_A_u}
\[ [u] = \bigcap\limits_{i=1,\ldots,r}L(\aut{B}'_i) \cap
         \bigcap\limits_{j=r+1,\ldots,r+s}\overline{L(\aut{B}'_j)}.
\rlap{\hbox to 215 pt{\hfill\qEd}}
\]
We now define the game \GSG (induced by a DPA \aut{A} over \Bsq) where the moves of
the players are classes from $\Bst/_{\approx}$ and $\Leq/_{\sim}$, respectively.
Accordingly, we call \GSG the \emph{semigroup game} of \aut{A}.

The game \GSG is defined similar to the block game \Gpempty.
The difference is that the players do not choose concrete words but the
respective classes from the relations $\sim$ and $\approx$. A play is
built up as follows: \pI chooses infinite classes
$[u_0],[u_1]\in\Bst/_{\approx}$, then \pO chooses a class
$\big[\obinom{u_0}{v_0}\big] \in \Leq/_{\sim}$. In each round thereafter,
\ie for $i\geq2$, \pI chooses an infinite class
$[u_i]\in\Bst/_{\approx}$ and \pO chooses a class
$\big[\obinom{u_{i-1}}{v_{i-1}}\big]\in\Leq/_{\sim}$.
A play is winning for \pO if and only if
$\obinom{u_0}{v_0}\obinom{u_1}{v_1}\obinom{u_2}{v_2}\cdots$ is accepted
by \aut{A}.

Note that $\Bst/_{\approx}$ contains at least one infinite class and that
for each class $[u]$ there exists at least one class in $\Leq/_{\sim}$
associated with $[u]$ (by the definition of~$\approx$).
Hence, both players can always move.
Furthermore, the winning condition of \GSG is well-defined
because acceptance of \aut{A} is independent of representatives:
If $\big[\obinom{u_i}{v_i}\big] = \big[\obinom{u'_i}{v'_i}\big]$ for all $i\in\Nat$, then
$\obinom{u_0}{v_0}\obinom{u_1}{v_1}\cdots\in\LA\iff\obinom{u'_0}{v'_0}\obinom{u'_1}{v'_1}\cdots\in\LA$.

\GSG\label{pageref:size_of_GSG} can be modeled by a parity game on a graph of size $O(2^{2(mn)^n}mn)$.
(Thus, its winner is computable~\cite{GTW02AutLogInfGam}.)
In the vertices we keep track of the $\approx$-classes recently chosen by \pI,
a color depending on the course of the play and
the current state $q$ of \aut{A}. The vertex reached by
a move $\big[\obinom{u}{v}\big]$ of \pO is colored by $\mu\obinom{u}{v}_{q,q'}$,
where $q'$ is the state reached in \aut{A} from $q$ when reading $\obinom{u}{v}$.

\section{Connecting the Block Game and the Semigroup Game}\label{sec:connection_block_semigroup_game}
In this section we show that \pI wins the block game \Gp{f} for all
functions $f:\Nat\to\Natp$ if and only if he wins the semigroup game \GSG.
This completes the reduction and also yields the proof of our main result.

The basic idea of the proof of Theorem~\ref{thm:equivalence_all_f_greater_than_f_0_GSG} (see below)
is, for arbitrary $f$, to simulate the moves of the players in \Gp{f} by the corresponding
equivalence classes of the relations $\sim$ and $\approx$, respectively, and vice versa.
For the last-mentioned direction,
one has the problem whether a class $[u_i]$ contains an
appropriate representative, \ie one of length between $f(i)$ and $2f(i)$.
We use Lemma~\ref{lem:length} to show that there exists a particular $f$
such that each function $g$ with $g\sqsupseteq f$ indeed has this property.
Then, the following lemma completes the proof.

\begin{lem}\label{lem:equivalence_all_f_all_f_greater_than_f_0}
\pI wins \Gp{f} for all functions $f:\Nat\to\Natp$ if and only if there exists
a function $g_0 : \Nat \to \Natp$ such that \pI wins \Gp{g} for
all $g\sqsupseteq g_0$.
\end{lem}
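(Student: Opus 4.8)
The plan is to prove the two directions separately, the left-to-right direction being trivial. If \pI wins \Gp{f} for \emph{all} functions $f:\Nat\to\Natp$, then in particular he wins \Gp{g} for all $g\sqsupseteq g_0$ where $g_0$ is any fixed function, say the constant function $1$; so one may take $g_0 := 1$. The content is therefore entirely in the converse: assuming there exists $g_0$ such that \pI wins \Gp{g} for all $g\sqsupseteq g_0$, I must show \pI wins \Gp{f} for \emph{every} $f:\Nat\to\Natp$.

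The key observation is that the block game gives \pI the freedom to choose, in each round $i$, a word $u_i$ of \emph{any} length in the interval $[f(i),2f(i)]$. So given an arbitrary target function $f$, I will have \pI play in \Gp{f} by secretly following his winning strategy in \Gp{g} for a suitable $g\sqsupseteq g_0$ that is "compatible" with $f$ in the sense that every length demanded by the $g$-strategy can be realized as a legal move in \Gp{f}. Concretely, I would construct $g\sqsupseteq g_0$ recursively so that $g$ is coarse enough to be dominated by $f$'s intervals: having fixed the prefix $g(0),\dots,g(i-1)$ (which, together with \pO's replies, determines how many bits \pI has already committed in the simulated $\Gp{g}$-play, hence which round of \Gp{f} we are in and what interval $[f(j),2f(j)]$ the next \Gp{f}-move must lie in), I would choose $g(i)$ large enough that $g(i)\geq g_0(i)$ and that the word of length $g(i)$ prescribed by the $\Gp{g}$-strategy can be "cut and repackaged" into one or more legal \Gp{f}-moves, filling exactly an initial segment of the required \Gp{f}-intervals. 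Since \pI controls all the lengths in the block game, and since the intervals $[f(j),2f(j)]$ have the covering property $2f(j)\geq f(j)$ (so consecutive intervals overlap enough to tile any sufficiently long bit string — this is exactly the flavour of the estimates already used in Propositions~\ref{prop:G_f_prime_G_prime_f} and \ref{prop:G_prime_f_prime_prime_G_f}), such a $g$ exists. The two plays then carry the same bit sequences $\obinom{a_0}{b_0}\obinom{a_1}{b_1}\cdots$, so the $\Gp{f}$-play is won by \pI because the simulated $\Gp{g}$-play is.

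The main obstacle, and the point requiring care, is the bookkeeping that synchronizes the two games: a single long move $u_i$ of length $g(i)$ in \Gp{g} may have to be split across several rounds of \Gp{f}, and conversely \pO's replies in \Gp{f} (which come block-by-block according to the $u'_j$ lengths) must be reassembled into the single reply $v_i$ expected by the $\Gp{g}$-strategy — and because in the block game \pI is "one move ahead", one must check that the prophecy moves ($u'_0,u'_1$ at the start, and the lookahead move each round) line up correctly in both games. The right way to organize this is an induction maintaining the invariant that, at every stage, the concatenation of \pI's moves played so far in \Gp{f} equals the concatenation of \pI's moves prescribed so far by the $\Gp{g}$-strategy (and likewise for \pO), and that the "pending" lookahead in \Gp{f} is a prefix of the next prescribed block of \Gp{g}; choosing $g(i)$ at step $i$ so as to overshoot the current pending position by at least $g_0(i)$ restores the invariant. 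Once the invariant is set up, each inductive step is a routine interval computation of the same kind as in Section~\ref{sec:block_game}, so I would state the invariant precisely, verify the base case (handling the two initial moves $u'_0,u'_1$), and then do one generic inductive step, leaving the remaining arithmetic to the reader.
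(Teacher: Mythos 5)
Your converse direction takes a genuinely different route from the paper's. The paper never simulates one block game inside another with a different parameter: it argues by contradiction, using determinacy of the block game to pass to \pO, and then reuses the machinery already built in Section~\ref{sec:block_game}. Concretely, if \pI failed to win some \Gp{f_0}, then \pO would win \Gp{f_0}, hence (by the contrapositive of Proposition~\ref{prop:G_f_prime_G_prime_f}) \Gd{f'_0}, hence \Gd{f} for all $f\sqsupseteq f'_0$ by Remark~\ref{rem:winning_monotone}, hence \Gp{f''} for all such $f$ by Proposition~\ref{prop:G_prime_f_prime_prime_G_f}; applying this to $f_*:=\max\{g_0,f'_0\}$ contradicts the assumed $g_0$, since $f''_*\sqsupseteq f_*\sqsupseteq g_0$. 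This buys a very short proof in which all block-splitting bookkeeping is inherited from results already proved. Your plan instead establishes a direct ``downward transfer'' for \pI between \Gp{g} and \Gp{f}, which avoids determinacy altogether but obliges you to redo, in strengthened block-to-block form, essentially the synchronization argument of Proposition~\ref{prop:G_prime_f_prime_prime_G_f} --- and that is precisely the part you leave to the reader, so the actual content of the lemma is not discharged in your write-up.

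Beyond the deferral, one step as stated does not typecheck: you let $g(i)$ depend on ``\pO's replies''. The hypothesis supplies a winning strategy only for each \emph{fixed} $g\sqsupseteq g_0$, so $g$ must be pinned down before the play begins; otherwise there is no single game \Gp{g} whose strategy you are consulting. This is repairable: have \pI always play blocks of the minimal admissible length $f(j)$ in \Gp{f}, bound the lengths chosen by the \Gp{g}-strategy from above by $2g(k)$, and define $g(i)$ by a recursion in $g(0),\dots,g(i-1)$ and $f$ alone (taking the maximum with $g_0(i)$), large enough that the still-buffered part of $u_0\cdots u_i$ outlasts all \Gp{f}-rounds elapsing before \pO has revealed the $|u_{i-1}|$ bits forming $v_{i-1}$. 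Also, ``filling exactly an initial segment of the required intervals'' is neither needed nor generally achievable; the buffered invariant you mention afterwards (a pending suffix of the prescribed output is carried over) is the right one, and the base case needs care since $u_0u_1$ must cover both $u'_0,u'_1$ and the rounds played while waiting for $v_0$. With these repairs your simulation produces identical bit streams in the two games and the argument goes through.
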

\begin{proof}
The direction from left to right is immediate. For the converse,
recall first that the block game \Gp{f} is determined for each $f$.
Assume there exists $f_0$ such that \pI does not win \Gp{f_0}.
Determinacy yields that \pO wins \Gp{f_0}.
By Proposition~\ref{prop:G_f_prime_G_prime_f} \pO wins \Gd{f'_0}, and
from Remark~\ref{rem:winning_monotone} it follows that she also wins \Gd{f} for
all $f\sqsupseteq f'_0$. Proposition~\ref{prop:G_prime_f_prime_prime_G_f}
yields that \pO wins \Gp{f''}, for all $f\sqsupseteq f'_0$. Towards
a contradiction, let $g_0$ be a function such that \pI wins \Gp{g} for
all $g\sqsupseteq g_0$, and let $f_*$ be the maximum of $g_0$ and
$f'_0$, \ie for all $i\in\Nat$
  \[ f_*(i):=\max\{g_0(i),f'_0(i)\}. \]
Since $f_*\sqsupseteq f'_0$ it holds that \pO wins
\Gp{f''_*}. However, since $f''_*\sqsupseteq f_*\sqsupseteq g_0$ \pI
must win \Gp{f''_*}, by assumption. This yields a contradiction which
means that $g_0$ cannot exist.
\end{proof}

Lemma~\ref{lem:equivalence_all_f_all_f_greater_than_f_0} and the next
theorem establish the second step of our reduction.

\begin{thm}\label{thm:equivalence_all_f_greater_than_f_0_GSG}
\pI wins \GSG if and only if there is a function $f:\Nat\to\Natp$ such that
\pI wins \Gp{g} for all $g\sqsupseteq f$.
\end{thm}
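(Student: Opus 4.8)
The strategy is to prove both directions by simulation, translating plays between \GSG and the block game \Gp{g} for a suitably chosen $f$. The key preparatory observation is that, by Lemma~\ref{lem:approx_regular_class}, each $\approx$-class $[u]$ is a regular $*$-language, so by Lemma~\ref{lem:length}, if $[u]$ is infinite then there is a DFA with some number of states $n_{[u]}$ recognizing it, and hence $[u]$ contains a word of length $\ell$ for every $\ell$ in an interval $[k,k+n_{[u]}]$; taking the maximum over the finitely many infinite $\approx$-classes, there is a single constant $N$ such that every infinite $\approx$-class contains a word of every length in some interval of the form $[k,k+N]$. Define $f$ by $f(i):=N$ for all $i$ (or any function bounded below by $N$); then for any $g\sqsupseteq f$ we have $g(i)\geq N$, so the interval $[g(i),2g(i)]$ has length $\geq g(i)\geq N$ and therefore meets every such interval of length $N$. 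Consequently every infinite $\approx$-class contains a representative of some length in $[g(i),2g(i)]$ — this is exactly the property flagged in the paragraph preceding the theorem.

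**Direction 1: \pI wins \GSG\ $\Rightarrow$ \pI wins \Gp{g} for all $g\sqsupseteq f$.** Here I would fix a winning strategy for \pI in \GSG\ and, given any $g\sqsupseteq f$, play \Gp{g} as follows. When it is \pI's turn to produce $u_i\in\B^{[g(i),2g(i)]}$: consult the \GSG-strategy to obtain an infinite class $[w_i]$, then — using the length property above — pick a representative $u_i\in[w_i]$ with $|u_i|\in[g(i),2g(i)]$ and play it. When \pO answers in \Gp{g} with $v_{i-1}\in\B^{|u_{i-1}|}$, feed the \GSG-strategy the move $\bigl[\binom{u_{i-1}}{v_{i-1}}\bigr]$. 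This builds a play of \GSG\ consistent with \pI's winning strategy, whose word $\binom{u_0}{v_0}\binom{u_1}{v_1}\cdots$ is literally the word of the \Gp{g}-play; since \pI wins the \GSG-play, the word is rejected by \aut{A}, so \pI wins the \Gp{g}-play. (One must note that because $u_{i-1}$ is chosen from an \emph{infinite} class, one can always find the next representative of the required length — the strategy never gets stuck.)

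**Direction 2: \pI wins \Gp{g} for all $g\sqsupseteq f$ $\Rightarrow$ \pI wins \GSG.** I would argue contrapositively via determinacy: \GSG\ is a parity game and hence determined, so assume \pO wins \GSG\ with strategy $\tau$, and show \pO wins \Gp{g} for some $g\sqsupseteq f$, contradicting the hypothesis. Take $g:=f$. Simulate a \Gp{f}-play against $\tau$: when \pI plays $u_i\in\B^{[f(i),2f(i)]}$, abstract it to the $\approx$-class $[u_i]$, which is infinite because $|u_i|\geq f(i)\geq N\geq 1$ and, more to the point, $[u_i]$ contains words of arbitrarily large length since $u_i$ has length at least $N$ lying in the ``tail'' region where the DFA for $[u_i]$ can pump (this uses the infiniteness of the relevant classes, which should be argued from Lemma~\ref{lem:length}); feed $[u_i]$ to $\tau$ to obtain $\bigl[\binom{u_{i-1}}{v_{i-1}}\bigr]$, and by the definition of $\approx$ this class contains a pair whose first component is exactly $u_{i-1}$ — play that $v_{i-1}$ in \Gp{f}. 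Again the \Gp{f}-play word and the \GSG-play word coincide up to $\sim$-representatives, and acceptance is representative-independent, so the \GSG-play being won by \pO forces the \Gp{f}-play to be won by \pO.

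**Main obstacle.** The delicate point is the ``infinite class'' bookkeeping. In \GSG\ \pI is forced to choose \emph{infinite} $\approx$-classes, and \pO's responses $\bigl[\binom{u_{i-1}}{v_{i-1}}\bigr]$ are arbitrary $\sim$-classes; when translating to \Gp{g} one must be sure (i) that the \GSG-strategy, applied to genuinely infinite classes, remains consistent — hence one must verify that the classes $[u_i]$ arising in Direction~2 are indeed infinite, which is where Lemma~\ref{lem:length} is invoked to say that a class containing a word of length $\geq N$ (with $N$ chosen large enough relative to the DFA sizes) is infinite; and (ii) that in Direction~1 one can \emph{repeatedly} draw a representative of $[u_{i-1}]$ of the exact length $|u_{i-1}|$ demanded — but $|u_{i-1}|$ was itself chosen by us in the previous round from the interval $[g(i-1),2g(i-1)]$, so this is automatic. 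Getting the constant $N$ right — large enough that (a) the interval $[g(i),2g(i)]$ always contains an admissible representative length and (b) any class with a word of length $\geq N$ is infinite — is the one quantitative step that needs care; everything else is routine simulation.
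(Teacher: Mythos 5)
Your proposal is correct and follows essentially the same route as the paper: both directions are simulations between \Gp{g} and \GSG, with Lemma~\ref{lem:length} used to extract a representative of length in $[g(i),2g(i)]$ from each infinite $\approx$-class once $g$ dominates the sizes of the class automata, and with a threshold guaranteeing that Player~I's concrete words fall into \emph{infinite} classes. The one genuine deviation is that you prove the right-to-left direction contrapositively (determinacy of the parity game \GSG gives \pO a winning strategy $\tau$, which you pull back to \Gp{g}), whereas the paper simulates \pI's block-game strategy in \GSG directly; the two are interchangeable here, and the paper invokes determinacy anyway in the neighbouring Lemma~\ref{lem:equivalence_all_f_all_f_greater_than_f_0}. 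Two small repairs: your $N$ is defined as a maximum over the automata for the \emph{infinite} $\approx$-classes only, which does not by itself yield your claim (b) that every class containing a word of length $\geq N$ is infinite -- take the maximum over \emph{all} class automata (or, as the paper does, add the length $d'$ of the longest word in any finite class, with $d'<n'$ by pumping); and in the right-to-left direction the hypothesized $f$ is arbitrary, so you should play $g:=\max\{f,N\}$ pointwise rather than $g:=f$. Both fixes are routine and you flag the relevant issues yourself.
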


\begin{proof}
We start with the direction from right to left. Let $f:\Nat\to\Natp$
be a function such that \pI wins \Gp{g} for all $g\sqsupseteq f$. We
define a function $g_0$ such that $g_0\sqsupseteq f$ and each word of
length $g_0(i)$ is contained in an infinite $\approx$-class, for all $i\in\Nat$.
To this end, let $d'$ be the length of a longest word in all finite
$\approx$-classes\footnote{If $\approx$ has no finite
equivalence class, then we define $d':=0$.} and define, for all $i\in\Nat$,
$g_0(i):=\max\{f(i),d'+1\}$.

Since $g_0\sqsupseteq f$, \pI wins \Gp{g_0} by assumption, and a winning strategy yields his first
two moves $u_0,u_1$. Both $[u_0]$ and $[u_1]$ are infinite, and so he can
choose them in \GSG. We simulate \pO's answer $\big[\obinom{u_0}{v_0}\big]$
by choosing $v_0$ in \Gp{g_0}, and \pI's winning strategy yields $u_2$ with $[u_2]$ being
infinite. Choosing $[u_2]$ in \GSG we obtain \pO's next move
$\big[\obinom{u_1}{v_1}\big]$, and so on.

We argue that the plays built up have the same maximal color occurring
infinitely often. It suffices to show that in both plays a move of \pO
leads \aut{A} to the same state, via paths with equal maximal
color. Then, the rest follows by induction. Let $q_i$ be the current
state of \aut{A} and $u_i,u_{i+1}$ be the words chosen by \pI. If \pO
chooses $\big[\obinom{u_i}{v_i}\big]$ in \GSG, then we reach the state
$q_{i+1}:=\delta^*\big(q_i,\obinom{u_i}{v_i}\big)$
via the maximal color
$\mu\obinom{u_i}{v_i}_{q_i,q_{i+1}}$. The state $q_{i+1}$ is well-defined
because from $q_i$ every $\obinom{u'_i}{v'_i} \in \big[\obinom{u_i}{v_i}\big]$
leads \aut{A} to the same state, though via different paths, but with the
same maximal color. In \Gp{g_0} \pO chooses $v_i$. As in \GSG, we reach
the state $q_{i+1}$ via the maximal color
$\mu\obinom{u_i}{v_i}_{q_i,q_{i+1}}$.

Conversely, assume that \pI wins \GSG. Let
$\aut{A}_1,\ldots,\aut{A}_r$ be automata recognizing all the
$\approx$-classes, and $n'$ the maximal number of states among these
automata, \ie $n':=\max\{n_1,\ldots,n_r\}$, where $n_j$ is the number of states of $\aut{A}_j$ ($j=1,\ldots,r$).
Let $f$ be the constant function with $f(i):=n'$ for all $i\in\Nat$.
We first show that \pI wins \Gp{f}:
\pI's winning strategy in \GSG yields $[u_0],[u_1]$.
Since $[u_0],[u_1]$ are infinite, we can apply Lemma~\ref{lem:length}.
Accordingly, each $\aut{A}_j$ accepts a word of length between $f$ and
$f+n_j$ and thus between $f$ and $2f$, because $n_j\leq f$.\footnote{To simplify matters we write $f$ instead of $f(i)$.}
Hence, we can assume \wlofg that $f\leq|u_0|,|u_1|\leq2f$.
\pI chooses $u_0,u_1$ in \Gp{f} and \pO answers by a word $v_0$ with
$|v_0|=|u_0|$. We simulate this move by $\big[\obinom{u_0}{v_0}\big]$ in \GSG
and obtain \pI's answer $[u_2]$, so the next move of \pI in \Gp{f} is
$u_2$ (for appropriate $u_2$). \pO chooses $v_1$ with $|v_1|=|u_1|$,
and so on.

The plays built up this way have the same maximal color occurring
infinitely often, using the same inductive argument as above. Starting
at $q_i$, \pO's move $v_i$ in \Gp{f} has the same effect as the
corresponding move $\big[\obinom{u_i}{v_i}\big]$ in \GSG, \ie we reach
the state $q_{i+1}:=\delta^*\big(q_i,\obinom{u_i}{v_i}\big)$ via
the maximal color $\mu\obinom{u_i}{v_i}_{q_i,q_{i+1}}$.

We complete the proof by showing that \pI wins \Gp{g} for all $g\sqsupseteq f$.
Let $|[a,b]|:=b-a$ be the size of the interval
$[a,b]$. If $g\sqsupseteq f$, then (since $|[f,2f]|=n'$) it holds
$|[g(i),2g(i)]|\geq n'$, for all $i\in\Nat$. Hence, to win \Gp{g} \pI
simply chooses longer representatives of the $\approx$-classes than in
\Gp{f}.
\end{proof}

A thorough analysis of the constructions of the $\sim$-classes and
$\approx$-classes, respectively, yields an upper bound for $n'$. Let
$n$ be the number of states of \aut{A} and $m$ the number of
colors. Let $u,v\in\Bst$ with $|u|=|v|$. Since \aut{A} is deterministic,
there is exactly one entry distinct from $\bot$ in each of the $n$
rows of $\mu\obinom{u}{v}$, and $\aut{A}_{p,q,k}$ has at most $mn$
states. Hence, each $\aut{A}_{[\obinom{u}{v}]}$ has at most $(mn)^n$
states, \ie as many as the product of $n$ (deterministic) automata
of size $mn$. To obtain an automaton for a class $[u]$ we have to intersect
$\Index(\sim)$ languages (\cf page~\pageref{pageref:definition_L_A_u}).
By the same argument as above, there are at most $(mn)^n$ possible
matrices identifying all the $\sim$-classes. Since our construction
includes determinization, we obtain each $\aut{A}_{[u]}$ having
at most $k$ states, where
\[ k\leq(2^{(mn)^{n}})^{(mn)^n}=2^{(mn)^{2n}}. \]
Next, we obtain our main result showing that in regular games constant delay
is sufficient for \pO to win, if she can win with delay at all.
Recall that we write $\const{d}$ for the constant delay function,
$\const{d}(0)=d+1$ and $\const{d}(i)=1$ for $i>0$.

\begin{lem}\label{lem:bound}
Let $n'$ be as in the proof of Theorem~\ref{thm:equivalence_all_f_greater_than_f_0_GSG}.
Then, \pO wins \GSG if and only if \pO wins \Gd{\const{2n'-1}}.
\end{lem}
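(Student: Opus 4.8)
The plan is to factor the equivalence through the block game \Gp{f} for the \emph{constant} function $f$ with $f(i)=n'$ for all $i\in\Nat$. I would show (Step~1) that \pO wins \GSG if and only if \pO wins \Gp{f}, and (Step~2) that \pO winning \Gp{f} implies \pO winning \Gd{\const{2n'-1}}; together these give the implication from left to right. The converse comes essentially for free from the reductions already in place: if \pO wins \Gd{\const{2n'-1}} then \pI does not win it, so \pI does not win \Gd{g} for all $g$; by Corollary~\ref{cor:equivalence_all_Gf_all_G_prime_f} he does not win \Gp{g} for all $g$; by Lemma~\ref{lem:equivalence_all_f_all_f_greater_than_f_0} there is no $g_0$ with \pI winning \Gp{g} for all $g\sqsupseteq g_0$; by Theorem~\ref{thm:equivalence_all_f_greater_than_f_0_GSG} \pI does not win \GSG; and by determinacy of \GSG, \pO wins \GSG.

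For Step~1, the direction ``\pI wins \GSG $\Rightarrow$ \pI wins \Gp{f}'' is precisely the first part of the ``conversely'' direction in the proof of Theorem~\ref{thm:equivalence_all_f_greater_than_f_0_GSG}, which is stated there for exactly this constant function $n'$. For ``\pI wins \Gp{f} $\Rightarrow$ \pI wins \GSG'' I would reuse the ``right to left'' part of that same proof applied with this $f$: there one constructs $g_0(i):=\max\{f(i),d'+1\}$, where $d'$ is the length of a longest word occurring in a finite $\approx$-class; since such a class is recognised by one of the automata $\aut{A}_j$ with $n_j\le n'$ states, it contains no word of length $\ge n_j$, so $d'\le n'-1$, hence $g_0=f$, and the remainder of that argument uses only a winning strategy for \pI in $\Gp{g_0}=\Gp{f}$ to produce one for \pI in \GSG. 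Since both \GSG and \Gp{f} are determined, the equivalence for \pI yields the one for \pO.

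For Step~2, let $\tau$ be a winning strategy for \pO in \Gp{f}. In \Gd{\const{2n'-1}} player \pI plays $2n'$ input bits in the first round and one bit in each later round, while \pO plays one bit per round, so before producing her bit $b_i$ player \pO has already read $a_0\cdots a_{2n'-1+i}$. I would have \pO cut the input into consecutive blocks $u_j:=a_{jn'}\cdots a_{(j+1)n'-1}$ of length $n'$ --- a legal move sequence for \pI in \Gp{f} since $n'\in[n',2n']$ --- feed them to $\tau$, obtain \pO-blocks $v_0,v_1,\ldots\in\B^{n'}$, and output $v_0v_1v_2\cdots$ one bit at a time. The point to verify is timing: as \pI is one block ahead in \Gp{f}, the block $v_j$ is determined by $u_0,\ldots,u_{j+1}$, \ie by $a_0\cdots a_{(j+2)n'-1}$; and whenever $jn'\le i<(j+1)n'$, the prefix seen before $b_i$, namely $a_0\cdots a_{2n'-1+i}$, already contains $a_0\cdots a_{(j+2)n'-1}$ because $2n'-1+i\ge 2n'-1+jn'=(j+2)n'-1$. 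Hence the strategy is well defined, and the play it builds up is $\binom{\alpha}{\beta}$ with $\alpha=u_0u_1\cdots$ and $\beta=v_0v_1\cdots$, which is exactly a play of \Gp{f} consistent with $\tau$; as $\tau$ is winning, $\binom{\alpha}{\beta}\in\LA$, so \pO wins \Gd{\const{2n'-1}}.

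The main obstacle I anticipate is this last piece of bookkeeping --- lining up the block decomposition of the input, the ``one move ahead'' offset built into the block game, and the constant delay so that no output bit is ever demanded before the input bits it depends on have arrived. The identity $2n'-1+jn'=(j+2)n'-1$ shows that the delay $2n'-1$ is exactly tight for this simulation, which is reassuring but also leaves no slack to hide an off-by-one error. Everything else is either a direct appeal to the proof of Theorem~\ref{thm:equivalence_all_f_greater_than_f_0_GSG} or a formal consequence of Corollary~\ref{cor:equivalence_all_Gf_all_G_prime_f}, Lemma~\ref{lem:equivalence_all_f_all_f_greater_than_f_0} and determinacy.
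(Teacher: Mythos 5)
Your proof is correct, and its left-to-right half is essentially the paper's: both factor through the block game \Gp{f} with the constant function $f\equiv n'$, both hinge on the observation that the longest word in a finite $\approx$-class has length $d'<n'$ (so every block \pI can legally play in \Gp{f} lies in an infinite class and the \GSG-simulation is available), and both then simulate a \Gp{f}-strategy inside \Gd{\const{2n'-1}}. You merely make explicit the timing bookkeeping that the paper compresses into the remark that the factor $2$ comes from \pI's first move; your identity $2n'-1+jn'=(j+2)n'-1$ is exactly the verification needed, and your observation that $g_0=f$ when $f\equiv n'$ (so that a single winning strategy in \Gp{f} suffices for the \GSG-simulation, without the ``for all $g\sqsupseteq f$'' hypothesis) is sound, since the theorem's argument consumes only the strategy in \Gp{g_0}. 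Where you genuinely diverge is the converse. The paper gives one more explicit simulation: from \pO winning \Gd{\const{2n'-1}} it passes by monotonicity to \Gd{g} with $g\equiv 2n'$, then via Proposition~\ref{prop:G_prime_f_prime_prime_G_f} to \Gp{g''}, and finally builds a \GSG-strategy by choosing representatives of length in $[g''(i),2g''(i)]$ using Lemma~\ref{lem:length}. You instead obtain the converse as a purely formal consequence of Corollary~\ref{cor:equivalence_all_Gf_all_G_prime_f}, Lemma~\ref{lem:equivalence_all_f_all_f_greater_than_f_0}, Theorem~\ref{thm:equivalence_all_f_greater_than_f_0_GSG} and determinacy of \GSG (a finite parity game). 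Your route is shorter and avoids redoing a simulation, at the price of being non-constructive on that side -- though since finite parity games are effectively determined, nothing is really lost; the paper's route keeps an explicit strategy transformation throughout. A minor redundancy: the reverse implication of your Step~1 is never used once the converse is discharged by the deductive chain.
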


\begin{proof}
Define $f(i):=n'$ for all $i\in\Nat$ and let $w$ of length $d'$ be a longest word in all
finite $\approx$-classes. Moreover, let $L(\aut{A}')=[w]$, where $\aut{A}'$ has $n$ states.
Then we have $d'<n$. Otherwise, the run of $\aut{A}'$ on $w$ had a loop,
which is a contradiction to the finiteness of $L(\aut{A}')$.
Since $n\leq n'$ we get $d'<n'$ and so $d'+1\leq n'$.
Thus, each $\approx$-class containing a word of length at least $f$ is infinite.

Assume that \pO wins \GSG. We first show that \pO wins \Gp{f}.
Let $u_0,u_1$ with $n'\leq|u_0|,|u_1|\leq2n'$ be the first move of \pI in
\Gp{f}. By the above remarks $[u_0],[u_1]$ are infinite, and we can
simulate $[u_0],[u_1]$ in \GSG. \pO's winning strategy in \GSG yields
$\big[\obinom{u_0}{v_0}\big]$ for some suitable $v_0$. Let him choose
$v_0$ in \Gp{f}. Then \pI chooses $u_2$ and we simulate $[u_2]$ in
\GSG, and so on.

As in the proof of Theorem~\ref{thm:equivalence_all_f_greater_than_f_0_GSG},
we obtain plays with the same maximal color occurring infinitely often, and so
\pO wins \Gp{f}. Simulating a winning strategy for \Gp{f} she also wins
\Gd{\const{2n'-1}}. The factor $2$ comes from the fact that we need at least
$2n'$ bits when simulating \pI's first move in \Gp{f}.

Conversely, let \pO win \Gd{\const{2n'-1}} and $g(i):=2n'$, for all
$i\in\Nat$. Since $g\sqsupseteq\const{2n'-1}$, \pO wins \Gd{g}.
Then, by Proposition~\ref{prop:G_prime_f_prime_prime_G_f},
she also wins \Gp{g''}. Given a winning strategy for \pO in \Gp{g''}
we can specify one for her in \GSG as follows: A move $[u_i]$ of \pI is
simulated by $u_i$ in \Gp{g''}, for $g''(i)\leq|u_i|\leq2g''(i)$.
(By Lemma~\ref{lem:length}, an appropriate representative $u_i$ must exist
because $g''\sqsupseteq g$, and so $|[g''(i),2g''(i)]|\geq n'$ for all
$i\in\Nat$.) We use \pO's answer $v_{i-1}$ to choose
$\big[\obinom{u_{i-1}}{v_{i-1}}\big]$ in \GSG.
This yields a play winning for \pO in \GSG.
\end{proof}

With Corollary~\ref{cor:equivalence_all_Gf_all_G_prime_f}, Lemma~\ref{lem:equivalence_all_f_all_f_greater_than_f_0}
and Theorem~\ref{thm:equivalence_all_f_greater_than_f_0_GSG} we have shown that the problem whether
\LA is solvable with finite delay is reducible to the question whether
\pO wins \GSG. Finally, Lemma~\ref{lem:bound} shows that
\LA is solvable with finite delay if and only if it is solvable with constant delay.

\begin{thm}\label{thm:main}
Let \aut{A} be a DPA over \Bsq. Then, \LA is solvable with finite delay
if and only if \LA is solvable with delay $2n'-1$.
There is a continuous operator $\lambda$ such that
$\{\obinom{\al}{\lambda(\al)}\mid\al \in \Bom\} \subseteq L(\aut{A})$
if and only if there is a $(2n'-1)$-delay operator with the same property.
\end{thm}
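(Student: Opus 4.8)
The plan is to assemble Theorem~\ref{thm:main} purely from the chain of reductions already established, treating the two stated equivalences (the ``solvable with finite delay'' form and the ``operator'' form) as two faces of the same statement. First I would recall the dictionary set up in Section~\ref{subsec:regular_games_delay}: strategies for \pO in \Gd{f} correspond exactly to $h$-delay operators $\lambda$ with $\{\obinom{\al}{\lambda(\al)}\mid\al\in\Bom\}\subseteq\LA$, and by Lemma~\ref{lem:continuous_to_delay_operator} every continuous operator is an $h$-delay operator for some strictly monotone $h$, hence induces a winning strategy for \pO in \Gd{f_h}. Conversely a winning strategy in any \Gd{f} gives an $h$-delay (in particular continuous) operator. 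So ``there is a continuous $\lambda$ with the containment property'' is literally the same as ``\pO wins \Gd{f} for some $f:\Nat\to\Natp$'', i.e.\ ``\LA is solvable with finite delay'', and ``there is a $(2n'-1)$-delay operator'' is the same as ``\pO wins \Gd{\const{2n'-1}}''. Thus it suffices to prove the first sentence of the theorem; the second sentence then follows by unwinding these definitions.

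For the first sentence, the nontrivial direction is left-to-right (the right-to-left direction is immediate, since \pO winning \Gd{\const{2n'-1}} is a special case of solvability with finite delay). I would argue it by contraposition through determinacy and the reduction chain. Suppose \LA is \emph{not} solvable with delay $2n'-1$; by Lemma~\ref{lem:bound} this means \pO does not win \GSG, and since \GSG is a parity game it is determined, so \pI wins \GSG. By Theorem~\ref{thm:equivalence_all_f_greater_than_f_0_GSG}, \pI then wins \Gp{g} for all $g\sqsupseteq f$ for a suitable $f$; by Lemma~\ref{lem:equivalence_all_f_all_f_greater_than_f_0} this forces \pI to win \Gp{f} for \emph{all} functions $f:\Nat\to\Natp$; and by Corollary~\ref{cor:equivalence_all_Gf_all_G_prime_f}, \pI wins \GdLA{f} for all $f:\Nat\to\Natp$. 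By determinacy of each \Gd{f} this says \pO wins none of them, i.e.\ \LA is not solvable with finite delay. The contrapositive is exactly the left-to-right implication. Finally, one should note that the equivalences run symmetrically for \pO: Lemma~\ref{lem:bound} is phrased for \pO, and Theorem~\ref{thm:equivalence_all_f_greater_than_f_0_GSG} together with Corollary~\ref{cor:equivalence_all_Gf_all_G_prime_f} and Lemma~\ref{lem:equivalence_all_f_all_f_greater_than_f_0} can be read in terms of \pO winning ``for some $f$'' by determinacy, so one can alternatively phrase the whole argument directly about \pO without contraposition.

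The bound $2n'-1$ with $n'=2^{(mn)^{2n}}$ (so the delay is at most doubly exponential, $2n'-1 = 2^{(mn)^{2n}+1}-1$) is not proved here afresh: it is carried in from the analysis preceding Lemma~\ref{lem:bound}, where each automaton $\aut{A}_{[u]}$ recognizing an $\approx$-class was shown to have at most $k\le 2^{(mn)^{2n}}$ states, and $n'$ is defined as the maximal such state count. So in the write-up of Theorem~\ref{thm:main} I would simply reference that $n'\le 2^{(mn)^{2n}}$ and cite Lemma~\ref{lem:bound} for the precise delay value.

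I expect the proof itself to be a short bookkeeping exercise rather than a source of real difficulty: every substantive step (determinacy of parity games, the block-game reduction, the semigroup-game reduction, the complexity count, and Lemma~\ref{lem:bound}) is already in hand. The one point that needs a little care — and which I would state explicitly to avoid a gap — is the passage between ``for some $f$'' statements and ``for all $f$'' statements via determinacy: Remark~\ref{rem:winning_monotone} and Lemma~\ref{lem:equivalence_all_f_all_f_greater_than_f_0} are what make ``\pI wins \Gp{g} for all large $g$'' upgrade to ``\pI wins \Gp{f} for all $f$'', and it is exactly here that one must invoke determinacy of the block games. Getting that quantifier juggling right, and making sure the operator-vs-game translation is applied in both directions, is the only place where the argument could go subtly wrong.
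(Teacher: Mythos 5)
Your proposal is correct and follows essentially the same route as the paper: the paper proves Theorem~\ref{thm:main} exactly by chaining Corollary~\ref{cor:equivalence_all_Gf_all_G_prime_f}, Lemma~\ref{lem:equivalence_all_f_all_f_greater_than_f_0}, Theorem~\ref{thm:equivalence_all_f_greater_than_f_0_GSG} and Lemma~\ref{lem:bound}, passing between the two players' statements via determinacy, and reading off the operator formulation from the game--operator correspondence of Section~\ref{subsec:regular_games_delay} together with Lemma~\ref{lem:continuous_to_delay_operator}. Your explicit treatment of the contraposition and the ``for some $f$'' versus ``for all $f$'' quantifier bookkeeping matches what the paper leaves implicit.
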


Assuming that \aut{A} has $n$ states and $m$ colors
we can bound the number of vertices of \GSG by $2^{2(mn)^n+1}mn$.
Since it requires only $m$ colors,
its winner can be computed in time $O((2^{2(mn)^n+1}mn)^m)$ \cite{Sch07ParGameBigSteps}.

\begin{cor}\label{cor:decide_bounded_delay}
Let \aut{A} be a DPA over \Bsq. The problem
whether \LA is solvable with finite delay
and the problem whether there is a continuous operator $\lambda$ with
$\{\obinom{\al}{\lambda(\alpha)}\mid \alpha \in \Bom\} \subseteq \LA$
are in $2$\textsc{ExpTime}.
\end{cor}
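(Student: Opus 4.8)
The plan is to observe that, once the chain of reductions already established in Sections~\ref{sec:block_game}--\ref{sec:connection_block_semigroup_game} is in place, both decision problems in the statement reduce to computing the winner of a single parity game, namely the semigroup game \GSG, whose size is at most doubly exponential in \aut{A}. Concretely: by Theorem~\ref{thm:main}, together with the correspondence of Section~\ref{sec:operators_games_delay} between winning strategies in \Gd{f} and $h$-delay operators (and Lemma~\ref{lem:continuous_to_delay_operator} identifying continuous with uniformly continuous operators), the existence of a continuous operator $\lambda$ with $\{\obinom{\al}{\lambda(\al)}\mid\al\in\Bom\}\subseteq\LA$ has, for every \aut{A}, the same truth value as ``\LA is solvable with finite delay''. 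And by Corollary~\ref{cor:equivalence_all_Gf_all_G_prime_f}, Lemma~\ref{lem:equivalence_all_f_all_f_greater_than_f_0} and Theorem~\ref{thm:equivalence_all_f_greater_than_f_0_GSG} chained together, \pI wins \GdLA{f} for all $f:\Nat\to\Natp$ if and only if \pI wins \GSG; since all these games are determined, this says that \LA is solvable with finite delay if and only if \pO wins \GSG. Hence the algorithm is simply: construct \GSG from \aut{A}, solve it as a parity game, and answer ``yes'' precisely when \pO wins.

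Next I would verify that each stage runs in $2$\textsc{ExpTime}. Building \GSG is effective by Sections~\ref{sec:semigroup_game}--\ref{sec:connection_block_semigroup_game}, and all objects involved stay doubly exponential: there are at most $(mn)^n$ matrices $\mu\obinom{u}{v}$, so the iterative enumeration of the $\sim$-classes terminates after at most that many rounds, each round needing only an equivalence test on the automata $\aut{A}_{[\obinom{u}{v}]}$, which have at most $(mn)^n$ states (Lemma~\ref{lem:sim_regular_class} and the analysis following Theorem~\ref{thm:equivalence_all_f_greater_than_f_0_GSG}); the $\approx$-classes, being subsets of the set of $\sim$-classes, number at most $2^{(mn)^n}$, and by Lemma~\ref{lem:approx_regular_class} together with the determinization step each is recognized by a DFA with at most $k\leq 2^{(mn)^{2n}}$ states. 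Consequently the arena of \GSG has at most $2^{2(mn)^n+1}mn$ vertices and uses only $m$ colors, so its winner can be computed in time $O\big((2^{2(mn)^n+1}mn)^m\big)$ \cite{Sch07ParGameBigSteps}; since $(mn)^n\leq 2^{\mathrm{poly}(n,m)}$, all of these quantities are doubly exponential in the size of \aut{A}, and the whole procedure lies in $2$\textsc{ExpTime}.

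I do not expect a deep obstacle here — the corollary is essentially an audit of the preceding constructions. The one point that deserves emphasis is \emph{why} the decision must be routed through the delay-free game \GSG rather than through the delay game \Gd{\const{2n'-1}} handed to us by Theorem~\ref{thm:main}: the natural encoding of \Gd{\const{2n'-1}} as a parity game must store a buffer of about $2n'$ pending input bits, giving an arena of size roughly $2^{2n'}\cdot n$, and since $2n'=2^{(mn)^{2n}+1}$ this is $2^{2^{(mn)^{2n}+1}}\cdot n$, i.e.\ \emph{triply} exponential in \aut{A}, which would only yield a $3$\textsc{ExpTime} bound. It is precisely because the moves in \GSG are equivalence classes of $\sim$ and $\approx$, entirely independent of any delay function, that the state space stays doubly exponential; keeping track of this gap between the two games is the only subtlety in making the $2$\textsc{ExpTime} bound go through.
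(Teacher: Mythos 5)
Your proposal is correct and follows essentially the same route as the paper: the corollary is obtained by reducing both problems, via Corollary~\ref{cor:equivalence_all_Gf_all_G_prime_f}, Lemma~\ref{lem:equivalence_all_f_all_f_greater_than_f_0} and Theorem~\ref{thm:equivalence_all_f_greater_than_f_0_GSG}, to computing the winner of \GSG, whose arena has at most $2^{2(mn)^n+1}mn$ vertices and $m$ colors and is therefore solvable in time $O((2^{2(mn)^n+1}mn)^m)$. Your closing observation about why one must decide via \GSG rather than via the delay game \Gd{\const{2n'-1}} (which would naively cost a third exponential) is a correct and worthwhile clarification that the paper leaves implicit.
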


\section{Lookahead in Non-Regular Games}\label{sec:pusdown_games_delay}

In this section we show that the above results do not hold for context-free
$\omega$-languages (\CFLom, for an introduction see \eg \cite{CG78OmegaComputDetPushMach}).
Let us first recall that it is undecidable whether a context-free
$\omega$-language $L\subseteq\Bom$ is universal, \ie whether $L=\Bom$ holds.

\begin{thm}[see also \cite{Fin01TopProp}]\label{thm:finkel_context_free_gale_stewart_undecidable}
Let $L\subseteq\Bsqom$ be a context-free $\omega$-language.
Then, it is undecidable whether there exists $f$ such that \pO wins \GdL{f}.
\end{thm}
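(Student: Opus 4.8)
The plan is to reduce the universality problem for context-free $\omega$-languages over $\Bom$ to the question of whether \pO wins some delay game \GdL{f}, for a suitably constructed context-free $\omega$-language over \Bsq. Since the former is undecidable (as recalled just before the statement), so is the latter.

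\medskip

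\noindent\textbf{The reduction.}
First I would take an arbitrary context-free $\omega$-language $K \subseteq \Bom$ and construct from it a context-free $\omega$-language $L \subseteq \Bsqom$ such that \pO wins \GdL{f} for \emph{some} $f$ if and only if $K = \Bom$. The idea is to make the game so that \pO's moves are irrelevant and the winning condition on the play $\binom{a_0}{b_0}\binom{a_1}{b_1}\cdots$ depends only on the input stream $\al = a_0a_1\cdots$, namely
\[
  \binom{\al}{\be} \in L \iff \al \in K,
\]
for all $\be \in \Bom$. Concretely, let $L := \{\binom{\al}{\be} \mid \al \in K,\ \be \in \Bom\}$; this is context-free because $K$ is (the projection to the first track is a context-free operation, and the second track is unconstrained). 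With this $L$, in \emph{any} game \GdL{f} a play is winning for \pO exactly when \pI's stream lies in $K$, regardless of $f$ and regardless of how \pO plays. Hence \pO wins \GdL{f} for some $f$ (equivalently, for all $f$) if and only if every $\al \in \Bom$ lies in $K$, i.e.\ $K = \Bom$. Since the class of context-free $\omega$-languages over $\Bom$ has undecidable universality, the existence of $f$ with \pO winning \GdL{f} is undecidable.

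\medskip

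\noindent\textbf{Why the closure step is harmless.}
The one point needing a short argument is that $L$ as defined above is genuinely a context-free $\omega$-language. Depending on the exact machine model used for $\CFLom$ (e.g.\ B\"uchi pushdown automata), one simply takes a pushdown $\omega$-automaton for $K$ and lets it ignore the second component of each input letter $\binom{x}{y} \in \Bsq$; the acceptance condition is unchanged. This transformation is effective, so the reduction is a computable map from instances of the universality problem to instances of the delay-game problem, which is what is required for the undecidability conclusion.

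\medskip

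\noindent\textbf{Main obstacle.}
I expect no serious obstacle here: the construction is a padding/projection trick, and the only care needed is to phrase it against whatever fixed definition of $\CFLom$ the paper adopts, and to make sure the winning condition of \GdL{f} really does collapse to a property of the input stream alone (so that \pO's $h$-delay strategy cannot help her when $K \ne \Bom$, and trivially succeeds when $K = \Bom$). If one instead wanted the statement for a \emph{fixed} delay function, or wanted \pO to have an essentially forced role, one could alternatively encode $K$ so that \pI's first-track choices must describe a word while \pO is forced to copy, but for the bare undecidability claim the projection construction above is the cleanest route.
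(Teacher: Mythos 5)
Your reduction is exactly the one the paper uses: take $L_I\in\CFLom$, set $L:=\{\binom{\al}{\be}\mid\al\in L_I,\ \be\in\Bom\}$, and observe that \pO wins \GdL{f} for some (equivalently, any) $f$ if and only if $L_I$ is universal. The argument is correct and matches the paper's proof, including the remark that context-freeness of $L$ follows by having a pushdown $\omega$-automaton for $L_I$ ignore the second track.
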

\begin{proof}
We make a reduction from the universality problem for context-free $\omega$-languages.
Let $L_I\in\CFLom$ and $L:=\big\{\obinom{\al}{\be} \mid \al\in L_I, \beta\in \Bom \big\}$.
If $L_I$ is universal then $L$ is universal as well, and \pO wins
with any $f$.
Conversely, if $L_I$ is not universal, then \pI wins by playing a word $\al\notin L_I$.
There is no response \be such that $\obinom{\al}{\be}\in L$, therefore
\pO looses with each $f$.
Altogether, $L_I$ is universal if and only if
there exists $f$ such that \pO wins~\GdL{f}.
\end{proof}
It has recently been shown \cite{FLZ11} that the same holds for
\emph{deterministic} $\omega$-context-free specifications, but in that
case at least establishing the winner of the standard game \Gd{\const{0}}
is decidable \cite{Wal96PushProc}.

In addition to undecidability for the general case, we show that there exist
context-free specifications which are solvable with finite delay,
but not with constant delay.

\begin{exa}\label{ex:non_regular_not_constant_delay}
Let $L\subseteq\Bsqom$ be defined such that if \pI chooses an $\omega$-word
of the form $\al=1^{2m_0}0^{n_0}1^{2m_1}0^{n_1}\cdots$, for $m_i,n_i\in\Natp$, then
\pO wins if and only if he answers by $\beta=1^{m_0}0^{m_0+n_0}1^{m_1}0^{m_1+n_1}\cdots$.
This means \pO's $i$th block of $1$s must have exactly half the length of
\pI's $i$th block of $1$s, and both blocks must start at the same position.
If \al is not of the above form, then \pO wins as well.

The language $L$ is recognized by a deterministic $\omega$-pushdown automaton.
As long as the input is $\obinom{1}{1}$, we push a symbol on the stack.
If we read the first $\obinom{1}{0}$ after $\obinom{1}{1}$,
we start to pop symbols from the stack. If we reach the initial stack
symbol at the same time as we read the first $\obinom{0}{0}$
after $\obinom{1}{0}$ then we are satisfied and visit a final state.

Observe that \pO wins \GdL{f}, if $f(i):=2$ for all $i\in\Nat$.
When she has to give her $i$th bit $\be_i$ she already knows \pI's $(2i)$th bit $\al_{2i}$,
and that is enough to decide whether to play $0$ or $1$.

Let us show that $L$ is not solvable with constant delay.
Towards a contradiction, assume \pO wins \Gd{\const{d}} for some $d\in\Nat$.
We construct a winning strategy for \pI in \Gd{\const{d}} as follows:
\pI chooses $1^{d+1}$ as initial move and 1 as each of his $d$ subsequent moves.
\pO must answer each of these $d+1$ moves by choosing $1$. Otherwise, she loses immediately.
Afterwards, \pI chooses another $1$ to complete his block of $1$s to even length.
(After this move, \pI has chosen exactly twice as many $1$s as \pO.)
Whatever \pO answers, say $b$, \pI wins by choosing $1-b$ next.
This is due to the fact that the block of $1$s chosen by \pO gets either
too short or too long. 
\end{exa}

\section{Conclusion}\label{sec:conclusion}

In this paper we introduced and compared strategies
with different kinds of lookahead in regular infinite games.
We showed that continuous strategies can be reduced to
uniformly continuous strategies of a special form,
namely strategies with constant lookahead.
This result is a first step into a wider -- and it
seems rather unexplored -- topic. Let us mention some
aspects. First, it is straightforward to present
the results in a set-up that is symmetric in the two players.
We also skipped here a lower bound proof for the
double exponential size in Theorem \ref{thm:main}.
It is also possible to think of ``infinite lookahead'' where,
for instance, the second player may use information about
the first player's sequence up to a partition of the
space of sequences into regular sets. Moreover, while
basic questions about lookahead in context-free games have
recently been answered, some problems for visibly pushdown
winning conditions remain open, cf. \cite{FLZ11}.

\end{document}